\documentclass[journal,12pt,onecolumn,draftclsnofoot]{IEEEtran}
\usepackage[margin=1in]{geometry}

\usepackage{amsmath,amssymb,amsthm,mathtools,bm,bbm}
\usepackage{graphicx}
\usepackage{algorithm}
\usepackage{algpseudocode}
\usepackage{booktabs}
\usepackage{cite}
\usepackage{url}
\usepackage[hidelinks]{hyperref}

\newtheorem{theorem}{Theorem}
\newtheorem{lemma}{Lemma}

\newtheorem{assumption}{Assumption}
\newtheorem{corollary}{Corollary}
\newtheorem{remark}{Remark}

\begin{document}

\title{Semantic Rate Distortion and Posterior Design: Compute Constraints, Multimodality, and Strategic Inference}

\author{Emrah Akyol, \IEEEmembership{Senior Member, IEEE} \thanks{This research is supported by the NSF via CCF/CIF (CAREER) \#2048042.} \thanks{Emrah Akyol is with the Electrical and Computer Engineering Department, Binghamton University, Binghamton, NY 13902 USA (e-mail: eakyol@binghamton.edu).} }

\maketitle

\begin{abstract}
We study strategic Gaussian semantic compression under rate and compute
constraints, where an encoder and decoder optimize distinct quadratic
objectives.
A latent Gaussian state generates a task-dependent semantic variable, and the
decoder best responds via MMSE estimation, reducing the encoder’s problem to
posterior covariance design under an information-rate constraint.
We characterize the strategic rate-distortion function in direct, remote, and
full-information regimes, derive semantic waterfilling and rate-constrained
Gaussian persuasion solutions, and establish Gaussian optimality under
misaligned objectives.
We further show that architectural compute limits act as implicit rate
constraints, yielding exponential improvements in semantic accuracy with model
depth and inference-time compute, while multimodal observation eliminates the
geometric-mean penalty inherent to remote encoding.
These results provide information-theoretic foundations for data- and
energy-efficient AI and offer a principled interpretation of modern multimodal
language models as posterior-design mechanisms under resource constraints.
\end{abstract}

\begin{IEEEkeywords}
semantic communication, strategic communication, rate--distortion, Bayesian persuasion, Gaussian signaling, multimodal learning, scaling laws
\end{IEEEkeywords}

\section{Introduction}
\label{section1}

The current era of artificial intelligence (AI) is defined by scaling laws:
increasing the amount of data, model size, and computation has been a reliable
route to improved performance and emergent capabilities. This trajectory,
however, confronts hard constraints. Data availability is finite, training and
inference compute are expensive, and energy consumption has become a central
bottleneck in deploying large-scale AI systems responsibly. These realities
call for principled statistical and information-theoretic tools that
characterize fundamental trade-offs among energy, data, computation, and
performance. This paper contributes such a toolset by developing a
rate--distortion (RD) and information design theory for semantic inference under
resource constraints.

A key obstacle to data- and energy-efficient learning is that modern AI
systems rarely aim to reconstruct raw observations. Instead, they are trained
and evaluated on objectives such as classification accuracy, decision quality,
or semantic correctness. In these settings, the relevant information is not
the observation itself but a latent task-dependent quantity. In addition, AI
systems often operate in multi-agent environments with misaligned goals:
users, platforms, and automated decision-making modules may value different
aspects of the same information. These facts challenge classical information
theory in two ways. First, standard RD theory assumes a common
distortion measure shared by encoder and decoder, whereas semantic and
task-oriented systems naturally induce distinct objectives. Second, classical
Bayesian persuasion and information design characterize how a sender can shape
a receiver's posterior beliefs, but typically neglect explicit
communication-rate and compute constraints.

This work unifies these perspectives by proposing a model of
\emph{strategic Gaussian semantic compression}. A latent Gaussian state $X$
governs the task semantics, while the encoder cares about a semantic variable
$\Theta = BX + V$, where $B$ represents a semantic transformation and $V$
captures semantic noise, preference uncertainty, or task variation. The
decoder aims to infer $X$ from a rate-limited message $M$, while the encoder
evaluates performance through a quadratic semantic loss in $\Theta$. The
interaction is modeled as a Stackelberg game: the encoder commits to an
encoding policy under a rate constraint, and the decoder best responds by
forming a Bayesian MMSE estimate $\hat X = \mathbb{E}[X\mid M]$ under its own
objective. This model naturally captures three regimes of practical interest:
(i) \emph{direct} encoding where the encoder observes $X$;
(ii) \emph{remote} encoding where the encoder observes only $\Theta$ (or an
imperfect semantic proxy); and
(iii) \emph{full-information} encoding where the encoder observes $(X,\Theta)$
jointly and can exploit semantic noise to shape the decoder's beliefs. The
third regime connects directly to Gaussian persuasion and information design.

Our main technical contribution is to show that these strategic semantic
compression problems admit a clear posterior-geometry characterization.
Because the decoder best responds via MMSE estimation, the encoder's semantic
distortion depends on the encoding mechanism primarily through the posterior
covariance of $X$ given the message. For Gaussian sources, the information
constraint becomes a log-det bound on posterior uncertainty, revealing a
fundamental entropy budget:
\begin{equation}
\log\det K_X \ge \log\det\Sigma_X - 2R,
\label{eq1}
\end{equation}
where $K_X$ is the posterior error covariance and $R$ is the available rate.

Beyond these RD and persuasion results, the framework offers a
principled lens on contemporary AI architectures and the focus on energy
efficiency. In transformer-based large language models (LLMs), there is no
explicit bit pipe, yet information flow is constrained by finite attention
bandwidth, limited embedding dimensions, bounded context windows, and
restricted inference-time compute. These architectural bottlenecks impose
implicit rate constraints on internal belief formation and posterior
refinement, analogous to the explicit rate constraints studied here. The
resulting entropy geometry yields analytic explanations for scaling phenomena:
increasing depth, width, or compute increases the effective information budget
and sharpens posterior beliefs; chain-of-thought reasoning acts as sequential
rate allocation that refines posteriors over multiple steps. Similarly, the
theory quantifies the advantage of multimodal encoders: multiple modalities
increase the recoverable semantic covariance and eliminate the geometric-mean
information collapse inherent to remote encoding. This provides a rigorous
explanation for why multimodal systems (such as vision language models)
achieve superior data efficiency without proportional increases in compute.

\smallskip

The paper makes the following contributions:
(i) it introduces a strategic Gaussian semantic compression framework linking
RD theory and information design under misaligned objectives;
(ii) it derives exact strategic rate distortion characterizations in direct
and remote observation regimes, including Gaussian optimality and semantic
waterfilling solutions;
(iii) it formulates and solves rate-constrained Gaussian persuasion in the
full-information regime, including closed-form diagonal solutions;
(iv) it quantifies performance gaps between observation regimes and shows how
multimodality closes these gaps; and
(v) it connects these information-theoretic limits to energy-constrained AI
architectures by interpreting architectural bottlenecks as implicit rate
constraints.

\smallskip

This paper is organized as follows.
Section~\ref{section2} reviews related work.
Section~\ref{section3} introduces the strategic semantic compression model and
defines the strategic RD function.
Section~\ref{section4} presents the main RD and rate-constrained
Gaussian persuasion results for direct, remote, and full-information encoders.
Section~\ref{section5} develops implications for multimodality and compute-limited
inference in learning systems.
Section~\ref{section6} discusses limitations and future directions, and
Section~\ref{section7} concludes.
%Long technical proofs  are relegated to the appendices.

\section{Related Work}
\label{section2}

This paper lies at the intersection of RD theory, strategic
communication (Bayesian persuasion), semantic compression, and modern machine
learning. We briefly review the most relevant prior work and position our
contributions relative to existing literature.

\subsection{Rate--Distortion Theory}

Rate--distortion (RD) theory provides the fundamental limits of lossy
compression under a prescribed fidelity criterion \cite{shannon1959coding}.
For Gaussian sources with quadratic distortion, the RD function admits a
closed-form solution via reverse waterfilling \cite{berger1971rate}. Vector
Gaussian RD and indirect (remote) source coding problems have also been
studied extensively \cite{wolf1970source,witsenhausen1975indirect}.
These formulations assume a common distortion measure shared by encoder and
decoder and do not address misaligned objectives. 

Our work departs from classical RD by allowing distinct quadratic objectives
for encoder and decoder, with the encoder's distortion defined on a semantic
variable $\Theta = BX + V$. We show that this leads to posterior-covariance
design problems with log-det entropy constraints, yielding semantic
waterfilling laws that generalize classical RD.

\subsection{Strategic Communication and Bayesian Persuasion}

Strategic communication and signaling games have a long history in economics
and control, beginning with cheap-talk models \cite{crawford1982strategic,farrell1996cheap}. 
%In linear-quadratic-Gaussian (LQG) settings, Stackelberg equilibria between
%encoder and decoder with misaligned objectives have been characterized,
%including the optimality of linear signaling policies under certain
%conditions \cite{akyol2016information}. 
A special case where a sender commits to an information structure
to influence a receiver's posterior beliefs and actions, namely the Bayesian Persuasion setting was studied in
\cite{kamenica2011bayesian}. Subsequent work has extended these ideas to
general state spaces and to Gaussian environments with quadratic payoffs,
leading to linear-Gaussian persuasion (LQG persuasion)
\cite{gentzkow2016persuasion,tamura2018bayesian,ui2020lqg,akyol2016information}.
Signaling games have received considerable interest from the engineering as well as economics researchers, see e.g. \cite{candogan2020optimal,bergemann2019information,dughmi2019algorithmic,alonso2016persuading,gould2023information,vora2020information, kazikli2021signaling,le2019persuasion, aybas2019persuasion, sayin2019hierarchical, sayin2021bayesian,farhadi2022dynamic, bacsar2024inducement, anand2025channel}. These models (except a few that consider communication rate constraints \cite{akyol2016information,le2019persuasion} and quantization \cite{dughmi2016persuasion,aybas2019persuasion,anand2025channel, akyol2023strategic}) typically assume unrestricted signaling and do not impose
communication or rate constraints. In \cite{akyol2016information}, Gaussian persuasion with rate constraints is studied, however, only in scalar domain. The full-information setting considered in this paper can be viewed as an extension of  \cite{akyol2016information} to semantic and multidimensional settings. 

\subsection{Semantic Communication}

Recent work on semantic communication seeks to move beyond symbol-level
fidelity toward task-aware or meaning-aware transmission
\cite{liu2021rate, xie2021semantic,zhang2022semantic,10540315,9955525}. Most of these approaches demonstrate
empirical gains and motivate goal-oriented communications, however, they do not study
information-theoretic characterizations of fundamental limits or strategic misalignment between the objectives. Our results provide such a characterization in a stylized multidimensional Gaussian setting,
identifying posterior entropy as the central resource governing semantic
performance under rate and observation constraints. It also connects compute to scaling laws and multimodal observations, bridging information theory with modern ML scaling- a novelty missing in prior work.

Perhaps closest to our work, in \cite{xiao2022rate}, semantic strategic rate distortion is considered. This paper primarily extends the concepts in \cite{akyol2016information} to a few different semantic settings at the high level, however does not address specific scenarios considered in this paper. 

\subsection{Multimodal Learning and Efficient AI}

Multimodal learning combines heterogeneous observations to improve robustness
and data efficiency \cite{baltruvsaitis2018multimodal}. Empirical results show
that multimodal models outperform unimodal ones, particularly in grounding
and reasoning tasks. Separately, empirical scaling laws relating model size,
compute, and performance have been documented for large neural networks
\cite{kaplan2020scaling,hoffmann2022training}.

By interpreting architectural bottlenecks and compute limits as implicit
information-rate constraints, our framework provides an information-theoretic
explanation for these phenomena. In particular, we show that multimodal
observation eliminates a geometric-mean semantic penalty inherent to remote
encoding and that increased compute corresponds to exponential reductions in
posterior uncertainty.

\subsection{Summary}

In contrast to prior work, this paper unifies rate-distortion theory,
strategic communication, Bayesian persuasion, semantic compression, and
neural scaling laws within a single posterior-design framework. The central
object is the posterior covariance of latent semantic variables, and the
central constraint is the entropy budget imposed by rate or compute.

Unlike classical Gaussian RD which assumes aligned
objectives and focuses on reconstruction fidelity under some common distortion measure, this work studies semantic
and strategic misalignment and characterizes optimal posterior geometry under
rate constraints. Unlike LQG signaling and Bayesian persuasion models, which
typically allow unrestricted signaling, we impose explicit information-rate
constraints and derive the resulting posterior-design structure. Finally,
unlike recent semantic communication and multimodal learning approaches, which
are primarily empirical, our results provide a first-principles,
information-theoretic explanation for multimodal advantages and compute-driven
scaling laws. To the best of our knowledge, this is the first work to unify
RD theory, Gaussian persuasion, semantic objectives, and
compute-limited learning architectures within a single analytical framework.

\section{Problem Formulation}
\label{section3}

We consider a strategic communication setting in which an encoder and a
decoder possess distinct semantic objectives. A latent Gaussian state $X$
governs the semantic content of interest, while the encoder evaluates
reconstruction through a derived semantic variable $\Theta = BX + V$. The
decoder, however, seeks to estimate $X$ itself. The encoder communicates with
the decoder through a rate-limited noiseless channel and anticipates the
decoder's Bayesian best response.

{\it Notation}: All random vectors are defined on a common probability space.
All logarithms are natural logarithms, in base $e$, the rates are measured in nats.
For a random vector $Y$, we use $\mathrm{Cov}(Y)$ to denote its covariance matrix and
$$\Sigma_Y := \mathrm{Cov}(Y)$$ when convenient.
For an auxiliary random variable $U$, $\Sigma_{Y\mid U}$ denotes the (average) MMSE
error covariance:
\begin{equation}
\Sigma_{Y\mid U}
=
\mathbb{E}\!\left[(Y-\mathbb{E}[Y\mid U])(Y-\mathbb{E}[Y\mid U])^\top\right].
\label{eq2}
\end{equation}
 $A\succeq 0$ (resp.\ $A\succ 0$) means $A$ is symmetric positive semidefinite
(resp.\ definite). For symmetric $A,B$, $A\preceq B$ (Loewner order) means $B-A\succeq 0$. We let $\mathrm{tr}(\cdot)$ and $\det(\cdot)$ denote trace and determinant; 
and $I(\cdot\,;\cdot)$ and $h(\cdot)$  denote mutual information  and differential entropy respectively. 
$\mathrm{diag}(\cdot)$ and $\mathrm{blkdiag}(\cdot)$ denote diagonal and block-diagonal matrices.
$I$ denotes the identity matrix of appropriate dimension.
The expression  $Y\perp Z$ means the random variables $Y$ and $Z$  are statistically independent.

Let $X\in\mathbb{R}^k$ satisfy
\begin{equation}
X \sim \mathcal{N}(0,\Sigma_X),
\label{eq3}
\end{equation}
where $\Sigma_X \succ 0$.

The encoder evaluates reconstruction through the linear--Gaussian semantic transform
\begin{equation}
\Theta = B X + V,
\label{eq4}
\end{equation}
where $B\in\mathbb{R}^{m\times k}$ is known and
\begin{equation}
V \sim \mathcal N(0,\Sigma_V).
\label{eq5}
\end{equation}
We assume
\begin{equation}
V \perp X.
\label{eq6}
\end{equation}

The mapping $X \mapsto \Theta$ reflects the encoder’s semantic or alignment
objectives (e.g., task relevance, supervision signals).

We analyze three observation regimes. The encoder observes
\begin{equation}
Z \in \{X,\ \Theta,\ (X,\Theta)\}.
\label{eq7}
\end{equation}
\begin{itemize}
\item \textbf{Direct encoder:} $Z=X$. The encoder observes the state perfectly.
\item \textbf{Remote encoder:} $Z=\Theta$. The encoder observes only a semantic proxy of $X$.
\item \textbf{Full-information encoder:} $Z=(X,\Theta)$. The encoder observes both state and semantics.
\end{itemize}

For blocklength $n$, an $(n,R)$ source code consists of an encoder
\begin{align}
f_n &: \mathcal{Z}^n \to \{1,\dots,\exp{nR}\},
\label{eq8}\\
M &= f_n(Z^n),
\label{eq9}
\end{align}
and a decoder
\begin{align}
g_n &: \{1,\dots,\exp\{nr\}\} \to \mathbb{R}^{kn},
\label{eq10}\\
\hat X^n &= g_n(M).
\label{eq11}
\end{align}
The rate constraint is
\begin{equation}
R = \frac{1}{n}\log |{\cal M}_n|.
\label{eq12}
\end{equation}

The decoder minimizes quadratic distortion
\begin{equation}
D_d(f_n,g_n)
=
\frac{1}{n}\sum_{t=1}^n
\mathbb{E}\!\left[(X_t-\hat X_t)^\top W_d (X_t-\hat X_t)\right],
\label{eq13}
\end{equation}
with $W_d\succ0$. Because $X$ is Gaussian and the objective is strictly convex,
the decoder's unique best response is the MMSE estimator
\begin{equation}
\hat X^n = \mathbb{E}[X^n\mid M].
\label{eq14}
\end{equation}

The encoder’s distortion is
\begin{equation}
D_e(f_n,g_n^\star)
=
\frac{1}{n}\sum_{t=1}^n
\mathbb{E}\!\left[(\Theta_t - \hat X_t)^\top W_e (\Theta_t - \hat X_t)\right],
\label{eq15}
\end{equation}
with $W_e\succ0$. Here, we deliberately measure semantic mismatch against the decoder’s state estimate $\hat X$,
since in downstream decision pipelines in learning systems,  the decoder typically exposes $\hat X$ (or actions based on it),
not a separate semantic reconstruction of $\Theta$.

A semantic distortion level $D_e$ is achievable at rate $R$ if there exists a sequence
of encoders $\{f_n\}$ such that \eqref{eq14} holds and
\begin{equation}
\limsup_{n\to\infty} D_e(f_n,g_n^\star)\le D_e.
\label{eq20}
\end{equation}

The strategic rate--distortion function is
\begin{equation}
R(D_e) = \inf\left\{R:\ D_e \text{ is achievable at rate } R\right\}.
\label{eq21}
\end{equation}

\section{Main Results}
\label{section4}

We now characterize the strategic Gaussian rate--distortion function under the
three encoder observation models introduced in Section~\ref{section3}. In the
direct and remote regimes, the problem reduces to convex optimization over
posterior covariances with a log-det rate constraint, yielding explicit
semantic waterfilling solutions. In the full-information regime, the encoder
can also manipulate posterior cross-covariances with semantic noise, leading to
a rate-constrained Gaussian persuasion problem with closed-form solutions in
the diagonalizable case.

\subsection{Direct Encoder: Observation of $X$}

We start with the setting where the encoder observes only $X$, hence any admissible message induces a posterior error
covariance $K_X$ satisfying $0\preceq K_X \preceq \Sigma_X$.

%The encoder's semantic
%distortion is given by \eqref{eq15}, when single letterized:
%$$
%D_e=\mathbb E\{(\Theta-\hat X)^{\top}W_e (\Theta-\hat X)\}
%$$

Using the orthogonal decomposition $X= \hat X + e$ with $e\perp\hat X$, the semantic
distortion can be represented as a form that we will use repeatedly for this setting. We start with lemmas that present such auxilliary results.

\begin{lemma}
\label{lemma1}
The encoder's  distortion satisfies
\begin{equation}
D_e
= \mathrm{tr}(W_e C_0) + \mathrm{tr}(\tilde W K_X).
\label{eq90}
\end{equation}
where $K_X$ denotes the posterior error covariance
\begin{equation}
K_X = \Sigma_{X\mid M},
\label{eq17}
\end{equation}
 the semantic offset covariance is given as:
\begin{equation}
C_0 = (B-I)\Sigma_X(B-I)^\top + \Sigma_V,
\label{eq18}
\end{equation}
and the effective weight matrix is
\begin{equation}
\tilde W = W_e(B-I) + (B-I)^\top W_e + W_e.
\label{eq19}
\end{equation}
\end{lemma}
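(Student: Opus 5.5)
The plan is a direct second-moment computation: decompose the semantic error $\Theta-\hat X$ into a part determined by the prior and a part carried by the estimation error $e=X-\hat X$. Throughout, $B$ is taken to be square ($m=k$) so that $B-I$ makes sense; this is implicit in \eqref{eq18}--\eqref{eq19}. I work first with a single letter $t$ and a generic message $M$, and handle the block average at the end.

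\emph{Step 1 (decomposition).} Write
\begin{equation*}
\Theta-\hat X=(B-I)X+V+e,\qquad e=X-\hat X,\quad \hat X=\mathbb E[X\mid M].
\end{equation*}
Then
\begin{equation*}
D_e=\mathbb E\big[(\Theta-\hat X)^\top W_e(\Theta-\hat X)\big]=\mathrm{tr}\big(W_e\,\mathrm{Cov}((B-I)X+V+e)\big),
\end{equation*}
since every term has zero mean.

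\emph{Step 2 (second moments).} I compute the covariance block by block using three facts.
\begin{itemize}
\item $V\perp(X,M)$. In the direct regime $M$ is a function of $X^n$, and $V^n$ is i.i.d.\ and independent of $X^n$. Hence $\mathbb E[XV^\top]=0$ and $\mathbb E[eV^\top]=0$.
\item $\mathbb E[ee^\top]=K_X$, by the definition \eqref{eq17}.
\item The orthogonality principle gives $e\perp\hat X$ in the $L^2$ sense, so $\mathbb E[Xe^\top]=\mathbb E[(\hat X+e)e^\top]=K_X$.
\end{itemize}
Combining these yields
\begin{equation*}
\mathrm{Cov}(\Theta-\hat X)=(B-I)\Sigma_X(B-I)^\top+\Sigma_V+K_X+(B-I)K_X+K_X(B-I)^\top .
\end{equation*}
The first two terms are exactly $C_0$ in \eqref{eq18}.

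\emph{Step 3 (trace algebra).} Apply $\mathrm{tr}(W_e\,\cdot)$ to the covariance. By cyclicity,
\begin{equation*}
\mathrm{tr}\big(W_eK_X(B-I)^\top\big)=\mathrm{tr}\big((B-I)^\top W_eK_X\big).
\end{equation*}
Collecting the three $K_X$-terms then gives $\mathrm{tr}(\tilde W K_X)$ with $\tilde W=W_e+W_e(B-I)+(B-I)^\top W_e$, which is \eqref{eq19}.

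\emph{Step 4 (block setting).} For blocklength $n$, apply Steps 1--3 to each letter $t$ with $\hat X_t=\mathbb E[X_t\mid M]$, which is the $t$-th component of \eqref{eq14}. The offset term $C_0$ does not depend on $t$. The error term is linear in $K_{X_t}$, so averaging over $t$ gives \eqref{eq90} with $K_X$ read as the time-averaged posterior error covariance $\frac1n\sum_t\Sigma_{X_t\mid M}$.

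The main point of care is Step 2. The cross term $\mathbb E[Xe^\top]$ must be evaluated as $K_X$ via orthogonality, not set to zero. The independence of $V$ from the message must also be justified from the direct-observation model. Everything else is routine algebra.
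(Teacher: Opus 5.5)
Your proof is correct and is essentially the paper's computation. The only difference is the grouping. You write $\Theta-\hat X=(B-I)X+V+e$ and evaluate the one nonzero cross term $\mathbb E[Xe^\top]=K_X$ by orthogonality. The paper writes $(B-I)\hat X+Be+V$, so all cross terms vanish, and then expands $-(B-I)K_X(B-I)^\top+BK_XB^\top$ into $BK_X+K_XB^\top-K_X$. Your grouping lands directly on the form of $\tilde W$ in \eqref{eq19}, and your explicit justification of $V\perp(X,M)$ and of the per-letter time averaging fills in details the paper leaves implicit.
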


\begin{proof}
Recall that $X\sim\mathcal N(0,\Sigma_X)$ and $\Theta = BX + V$ with
$V\sim\mathcal N(0,\Sigma_V)$ independent of $X$. The decoder uses
$\hat X = \mathbb E[X\mid M]$ and we define the posterior error $e := X - \hat X$.
By the MMSE orthogonality principle, $e \perp \hat X$ and
\begin{align*}
\mathrm{Cov}(e) &= K_X,\\
\mathrm{Cov}(\hat X) &= \Sigma_X - K_X.
\end{align*}

By definition,
\begin{equation*}
D_e
= \mathbb E\!\left[(\Theta - \hat X)^\top W_e(\Theta - \hat X)\right]
= \mathrm{tr}\!\left(W_e\,\mathrm{Cov}(\Theta-\hat X)\right).
\end{equation*}
Using $\Theta = BX + V = B(\hat X + e) + V$, we obtain
\begin{equation*}
\Theta - \hat X
= (B-I)\hat X + B e + V.
\end{equation*}
Because $e \perp \hat X$ and $V \perp (X,M)$, all cross-covariances among $(\hat X,e,V)$ vanish,
so
\begin{align*}
\mathrm{Cov}(\Theta-\hat X)
&= (B-I)\,\mathrm{Cov}(\hat X)\,(B-I)^\top
+ B\,\mathrm{Cov}(e)\,B^\top
+ \Sigma_V\\
&= (B-I)(\Sigma_X-K_X)(B-I)^\top
+ B K_X B^\top
+ \Sigma_V.
\end{align*}
Expanding and simplifying yields
\begin{equation*}
\mathrm{Cov}(\Theta-\hat X)
= C_0 + (B K_X + K_X B^\top - K_X).
\end{equation*}
Multiplying by $W_e$ and taking trace gives
\begin{equation*}
D_e
=
\mathrm{tr}(W_e C_0) + \mathrm{tr}\!\left((W_eB+B^\top W_e-W_e)K_X\right).
\end{equation*}
Using $\tilde W=W_e(B-I)+(B-I)^\top W_e+W_e = W_eB+B^\top W_e-W_e$ yields \eqref{eq90}.
\end{proof}

\begin{lemma}
\label{lemma2}
Let $X\sim \mathcal{N}(0,\Sigma_X)$ and let $U$ be any random variable. Define
\begin{equation}
K_X := \Sigma_{X\mid U}.
\label{eq91}
\end{equation}
Then
\begin{equation}
I(X;U)
\ge
\frac{1}{2}\log\frac{\det\Sigma_X}{\det K_X},
\label{eq92}
\end{equation}
with equality if and only if $X\mid U$ is Gaussian with covariance $K_X$ almost surely.
\end{lemma}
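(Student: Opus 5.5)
We will use the standard maximum-entropy argument. We write $I(X;U)=h(X)-h(X\mid U)$. Here $h(X)=\tfrac12\log\big((2\pi e)^k\det\Sigma_X\big)$ because $X$ is Gaussian. So it suffices to show
\begin{equation*}
h(X\mid U)\le \tfrac12\log\big((2\pi e)^k\det K_X\big),
\end{equation*}
and to identify when equality holds. If $K_X$ is singular, the right-hand side of \eqref{eq92} is $+\infty$. In that case the inequality is to be read as $I(X;U)=\infty$, which follows because $X$ given $U$ then lives almost surely on a lower-dimensional affine set while $X$ has a density. We will mention this as a degenerate case and assume $K_X\succ0$ from now on.

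\textbf{Step 1: conditional max-entropy.} For each realization $u$, let $K(u)=\mathrm{Cov}(X\mid U=u)$. The Gaussian maximizes entropy under a covariance constraint, so
\begin{equation*}
h(X\mid U=u)\le \tfrac12\log\big((2\pi e)^k\det K(u)\big),
\end{equation*}
with equality iff $X\mid U=u$ is Gaussian. (If $K(u)$ is singular, the left side is $-\infty$ and the bound still holds.)

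\textbf{Step 2: averaging via concavity of log-det.} By definition \eqref{eq2}, $\mathbb E[K(U)]=K_X$. Since $\log\det$ is concave on positive semidefinite matrices, Jensen's inequality gives
\begin{equation*}
\mathbb E[\log\det K(U)]\le\log\det K_X.
\end{equation*}
Combining this with Step 1 yields
\begin{equation*}
h(X\mid U)=\mathbb E\,h(X\mid U=u)\le\tfrac12\log\big((2\pi e)^k\det K_X\big),
\end{equation*}
and subtracting from $h(X)$ gives \eqref{eq92}.

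\textbf{Step 3: equality.} Equality requires equality in both steps. Step 1 forces $X\mid U=u$ to be Gaussian for a.e.\ $u$. Step 2 forces $K(u)$ to be a.s.\ constant, by strict concavity of $\log\det$ on positive definite matrices, and that constant must equal $K_X$. Conversely, suppose $X\mid U$ is Gaussian with covariance $K_X$ almost surely. Then Step 1 holds with equality pointwise, and Step 2 holds with equality because $K(U)\equiv K_X$. This gives exactly the stated equality condition.

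The main obstacle is the measure-theoretic care for general (non-discrete) $U$. This involves three points: defining $h(X\mid U)$ through regular conditional distributions, handling realizations with singular $K(u)$, and justifying the identity $I(X;U)=h(X)-h(X\mid U)$ when $h(X\mid U)$ might be $-\infty$. For the last point we would first assume $I(X;U)<\infty$, since otherwise the claim is trivial. Under that assumption the conditional density exists a.s., and the decomposition is valid.
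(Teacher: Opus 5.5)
Your proposal is correct and follows the paper's proof essentially step for step: write $I(X;U)=h(X)-h(X\mid U)$, bound each $h(X\mid U=u)$ by the Gaussian maximum-entropy value with covariance $K(u)$, then average using concavity of $\log\det$ together with $\mathbb E[K(U)]=K_X$; equality forces Gaussian conditionals and, by strict concavity, a constant $K(u)$. Your extra treatment of singular $K_X$ or $K(u)$ and of the case $I(X;U)=\infty$ supplies measure-theoretic details that the paper leaves implicit.
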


\begin{proof}
We have $I(X;U) = h(X) - h(X\mid U)$. Since $X\sim\mathcal N(0,\Sigma_X)$,
\begin{equation*}
h(X) = \frac{1}{2}\log\!\left((2\pi e)^k \det\Sigma_X\right).
\end{equation*}
For each $u$, let $\Sigma_{X\mid u}$ be the covariance of $X\mid U=u$. Then
\begin{equation*}
h(X\mid U=u)\le\frac{1}{2}\log\!\left((2\pi e)^k\det\Sigma_{X\mid u}\right),
\end{equation*}
with equality if and only if $X\mid U=u$ is Gaussian. Taking expectation and applying concavity of
$\log\det$ yields
\begin{equation*}
h(X\mid U)
\le
\frac{1}{2}\log\!\left((2\pi e)^k \det K_X\right),
\end{equation*}
which implies \eqref{eq92}. Equality requires Gaussianity and constant conditional covariance.
\end{proof}

\begin{lemma}
\label{lemma3}
Let $X\sim\mathcal N(0,\Sigma_X)$ and let $0\prec K_X\preceq\Sigma_X$. Define
\begin{equation}
\Sigma_Z^{-1} := K_X^{-1} - \Sigma_X^{-1}.
\label{eq93}
\end{equation}
Let $Z\sim\mathcal N(0,\Sigma_Z)$ be independent of $X$ and define $U=X+Z$.
Then $\Sigma_{X\mid U}=K_X$ and
\begin{equation}
I(X;U)=\frac12\log\frac{\det\Sigma_X}{\det K_X}.
\label{eq94}
\end{equation}
\end{lemma}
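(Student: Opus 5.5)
The plan is a direct Gaussian computation: exhibit the posterior covariance of $X$ given $U=X+Z$ explicitly, then read off the mutual information from the equality case of Lemma~\ref{lemma2}.

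\emph{Step 1: $Z$ is well defined.} Since $0\prec K_X\preceq\Sigma_X$ and matrix inversion is operator-monotone decreasing on positive definite matrices, we get $K_X^{-1}\succeq\Sigma_X^{-1}$. Hence $\Sigma_Z^{-1}:=K_X^{-1}-\Sigma_X^{-1}\succeq 0$.

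If $K_X\prec\Sigma_X$ strictly, then $\Sigma_Z^{-1}\succ0$ and $Z\sim\mathcal N(0,\Sigma_Z)$ is a genuine Gaussian vector. In the boundary case, $\Sigma_Z^{-1}$ is singular and the formula must be read as "infinite noise" along its null space. I would handle this by eigendecomposing $\Sigma_Z^{-1}=Q\,\mathrm{diag}(\lambda_1,\dots,\lambda_r,0,\dots,0)Q^\top$ with $\lambda_i>0$. I would then replace $U$ by the information-equivalent observation $\tilde U=P^\top X+\tilde Z$, where $P$ collects the first $r$ columns of $Q$ and $\tilde Z\sim\mathcal N(0,\mathrm{diag}(\lambda_i^{-1}))$. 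Equivalently, I would use the information form of the observation, so that all formulas below go through with $\Sigma_Z^{-1}$ in place of an inverse. This degenerate case is the only real subtlety, and I expect it to be the main obstacle, purely as a matter of interpretation.

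\emph{Step 2: posterior covariance.} $(X,U)$ is jointly Gaussian with $\mathrm{Cov}(U)=\Sigma_X+\Sigma_Z$ and $\mathrm{Cov}(X,U)=\Sigma_X$. The conditional law of $X$ given $U=u$ is therefore Gaussian, and its covariance does not depend on $u$:
\begin{equation*}
\Sigma_{X\mid U}=\Sigma_X-\Sigma_X(\Sigma_X+\Sigma_Z)^{-1}\Sigma_X=(\Sigma_X^{-1}+\Sigma_Z^{-1})^{-1}.
\end{equation*}
The second equality is the Woodbury identity; in the degenerate case it is the standard information-filter update $\Sigma_X^{-1}+P\,\mathrm{diag}(\lambda_i)P^\top$. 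Substituting the definition \eqref{eq93} gives $\Sigma_X^{-1}+\Sigma_Z^{-1}=K_X^{-1}$, so $\Sigma_{X\mid U}=K_X$.

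\emph{Step 3: mutual information.} By Step~2, $X\mid U=u$ is Gaussian with the constant covariance $K_X$ for every $u$. This is exactly the equality condition of Lemma~\ref{lemma2}, which yields \eqref{eq94}. Directly, $h(X\mid U)=\tfrac12\log((2\pi e)^k\det K_X)$ and $h(X)=\tfrac12\log((2\pi e)^k\det\Sigma_X)$, and subtracting gives $I(X;U)=\tfrac12\log(\det\Sigma_X/\det K_X)$.
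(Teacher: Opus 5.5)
Your proposal is correct and follows essentially the same route as the paper: compute the Gaussian posterior precision $\Sigma_X^{-1}+\Sigma_Z^{-1}=K_X^{-1}$, then read off the mutual information as $I(X;U)=h(X)-h(X\mid U)=\frac12\log(\det\Sigma_X/\det K_X)$. Your treatment of the boundary case where $\Sigma_X-K_X$ is singular, using the reduced observation $P^\top X+\tilde Z$, is a correct refinement that the paper's proof silently skips; as stated, $Z\sim\mathcal N(0,\Sigma_Z)$ is only well defined when $K_X\prec\Sigma_X$.
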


\begin{proof}
For jointly Gaussian $(X,U)$ with $U=X+Z$, the posterior covariance is
\begin{align*}
\Sigma_{X\mid U}= (\Sigma_X^{-1}+\Sigma_Z^{-1})^{-1}= \left(\Sigma_X^{-1} + K_X^{-1} - \Sigma_X^{-1}\right)^{-1}= K_X.
\end{align*}
Mutual information satisfies $I(X;U)=h(X)-h(X\mid U)$, which yields \eqref{eq94}.
\end{proof}

\begin{theorem}
\label{thm1}
The strategic RD function when the encoder observes $X$ is
\begin{equation}
R(D_e)
=
\min_{0\preceq K_X \preceq\Sigma_X}
\left\{
\frac12\log\frac{\det\Sigma_X}{\det K_X}
:\;
\mathrm{tr}(\tilde W K_X) \le D_e'
\right\},
\label{eq22}
\end{equation}
where $D_e' := D_e - \mathrm{tr}(W_e C_0)$. For every feasible $K_X$, Gaussian test
channels $U=X+Z$ achieve rate $\frac12\log\frac{\det\Sigma_X}{\det K_X}$ and induce
$\Sigma_{X\mid U}=K_X$.
\end{theorem}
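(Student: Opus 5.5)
The plan is to prove matching converse and achievability bounds, with Lemma~\ref{lemma1} reducing the encoder's semantic distortion to a linear function of the decoder's posterior error covariance. Specifically, whatever code is used, once the decoder best responds via \eqref{eq14} we have
\[
D_e = \mathrm{tr}(W_e C_0) + \mathrm{tr}(\tilde W K),
\]
where $K$ is the relevant (per-letter averaged) posterior error covariance. The constraint $D_e$ is therefore equivalent to $\mathrm{tr}(\tilde W K)\le D_e'$.

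\textbf{Converse.} Fix an $(n,R)$ code with $M=f_n(X^n)$ meeting \eqref{eq20}. Let $K_t := \Sigma_{X_t\mid M}$ and $\bar K := \frac1n\sum_t K_t$. The steps are as follows.
\begin{enumerate}
\item Applying Lemma~\ref{lemma1} per letter and averaging gives $D_e(f_n,g_n^\star)=\mathrm{tr}(W_eC_0)+\mathrm{tr}(\tilde W\bar K)$, since both terms are linear in $K_t$.
\item For the rate, use the standard chain
\[
nR \ge H(M) \ge I(X^n;M) = \sum_t I(X_t;M\mid X^{t-1}) \ge \sum_t I(X_t;M,X^{t-1}) \ge \sum_t I(X_t;M),
\]
which relies on the $X_t$ being i.i.d.
\item Lemma~\ref{lemma2} with $U=M$ then gives $I(X_t;M)\ge \frac12\log\det\Sigma_X-\frac12\log\det K_t$.
\item Concavity of $\log\det$ yields $R\ge \frac12\log\frac{\det\Sigma_X}{\det \bar K}$.
\item The averaged matrix satisfies $0\preceq\bar K\preceq\Sigma_X$, because each $K_t$ is an MMSE error covariance.
\end{enumerate}
Hence $R$ is at least the objective value of a feasible point of \eqref{eq22} (up to the $\limsup$/slack, handled by continuity of the program in $D_e'$). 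If $\bar K$ is singular the bound is $+\infty$, which is consistent.

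\textbf{Achievability.} Take any feasible $K_X\succ0$; the singular case follows by approximation with $K_X+\epsilon I$ clipped below $\Sigma_X$, or by continuity. Lemma~\ref{lemma3} gives a Gaussian test channel $U=X+Z$ with $\Sigma_{X\mid U}=K_X$ and $I(X;U)=\frac12\log\frac{\det\Sigma_X}{\det K_X}$, which is the final claim of the theorem. We then run a standard random-coding/covering argument at rate $I(X;U)+\delta$, using codewords $U^n$ drawn i.i.d.\ from the $U$-marginal and joint-typicality encoding.

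\textbf{Main obstacle.} The hard part is that the decoder's estimate is $\mathbb E[X^n\mid M]$, not a designed reconstruction. We must show that its average error covariance is at most $K_X+o(1)$ in Loewner order, so that linearity via $\tilde W$ gives $D_e\le D_e'+\mathrm{tr}(W_eC_0)+o(1)$.

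My first approach to this uses $\mathbb E[X^n\mid M]$, the optimal estimator. Its error covariance satisfies $\Sigma_{X_t\mid M}\preceq$ the error covariance of the linear estimator $L U_t(M)$, where $L=\Sigma_{XU}\Sigma_U^{-1}$. By typicality of $(X^n,U^n)$, the empirical joint second moments converge to those of the Gaussian pair, so the latter covariance tends to $K_X$ (after controlling the atypical event with a uniform integrability bound).

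A caveat is that $\tilde W$ need not be positive semidefinite, so a Loewner upper bound alone does not control $\mathrm{tr}(\tilde W\bar K)$. I would therefore instead show convergence $\bar K\to K_X$ exactly. This needs the reverse inequality, which comes from the converse chain: $\frac12\log\frac{\det\Sigma_X}{\det\bar K}\le R\approx I(X;U)$, combined with $\bar K\preceq K_X+o(1)$. Together these force $\det\bar K\ge\det K_X-o(1)$, and two such matrices with $\bar K\preceq K_X+o(1)$ and nearly equal determinants must be close, since $\det$ is strictly monotone on the PSD cone. This step is the most delicate.

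Finally, take the minimum over feasible $K_X$; the minimum is attained by compactness of $\{0\preceq K\preceq\Sigma_X\}$ and lower semicontinuity of the objective.
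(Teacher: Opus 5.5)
Your proposal is correct and follows the same route as the paper. You use Lemma~\ref{lemma1} to make $D_e$ affine in the posterior covariance, Lemma~\ref{lemma2} for the converse, and Lemma~\ref{lemma3} with block coding for achievability; your concavity-of-$\log\det$ single-letterization is equivalent to the time-sharing variable $U=(M,Q)$ that the paper uses in Theorems~\ref{thm2}--\ref{thm3} but skips in its proof of this theorem. Where you go beyond the paper is the achievability step it only asserts (that block coding makes the decoder's MMSE error covariance approach $K_X$): your sandwich, a Loewner upper bound on $\bar K$ via the suboptimal estimator $LU_t(M)$ together with the converse's determinant lower bound at rate $I(X;U)+\delta$, is what justifies $\bar K\to K_X$, and this is needed because $\tilde W$ can be indefinite.
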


\begin{proof}

\emph{Achievability:} Fix any $K_X$ with $0\preceq K_X\preceq\Sigma_X$ and
$\mathrm{tr}(\tilde W K_X)\le D_e'$. By Lemma~\ref{lemma3} there exists a Gaussian
encoder $U=X+Z$ that induces $\Sigma_{X\mid U}=K_X$ and achieves
\begin{equation*}
R = I(X;U)=\frac12\log\frac{\det\Sigma_X}{\det K_X}.
\end{equation*}
Block coding according to the test channel achieves distortion arbitrarily close to
$\mathrm{tr}(W_eC_0) + \mathrm{tr}(\tilde W K_X)$ and rate arbitrarily close to $R$.
Taking the infimum over all feasible $K_X$ yields \eqref{eq22}.

\emph{Converse:} Any code induces a posterior covariance $K_X$ satisfying
$0\preceq K_X\preceq\Sigma_X$ and distortion \eqref{eq90}. Moreover, by Lemma~\ref{lemma2},
\begin{equation*}
R \ge I(X;M)\ge \frac12\log\frac{\det\Sigma_X}{\det K_X}.
\end{equation*}
Therefore, any achievable $(R,D_e)$ must satisfy the constraints in \eqref{eq22} for some such $K_X$.

\end{proof}
\begin{remark}
In unconstrained  information design 
the sender chooses an information structure that induces a distribution over posterior
means $\hat X=\mathbb E[X\mid M]$ subject to Bayes plausibility \cite{kamenica2011bayesian}. For the Gaussian setting at hand, this implies second order moment conditions
(see e.g., \cite{tamura2018bayesian,ui2020lqg}),
\begin{equation}
0 \preceq \Sigma_{\hat X\hat X} \preceq \Sigma_X,
\label{eq23}
\end{equation}
together with
\begin{equation}
K_X=\Sigma_X-\Sigma_{\hat X\hat X}.
\label{eq24}
\end{equation}
Since the sender’s quadratic objective is affine in these covariance blocks, the
unconstrained problem is an SDP, as shown in \cite{tamura2018bayesian}. 

Imposing an explicit communication-rate constraint $I(X;M)\le R$ shrinks the Bayes-plausible
set by enforcing an entropy budget on the posterior error covariance:
\begin{equation}
\log\det K_X \ge \log\det\Sigma_X - 2R,
\label{eq25}
\end{equation}
which is the same constraint highlighted in \eqref{eq1}.
\end{remark}

%When $\Sigma_X$ and $\tilde W$ are jointly diagonalizable and $\tilde W\succ 0$,
%writing $K_X = \Sigma_X^{1/2}UDU^\top\Sigma_X^{1/2}$ with $D=\mathrm{diag}(d_1,\dots,d_k)$
%yields the semantic waterfilling law
%\begin{equation}
%d_i^\star = \min\left\{1,\ \frac{\nu}{\lambda_i}\right\},
%\label{eq26}
%\end{equation}
%where $\{\lambda_i\}$ are the eigenvalues of $\Sigma_X^{1/2}\tilde W\,\Sigma_X^{1/2}$ and
%$\nu$ is chosen to satisfy $\sum_{i=1}^k \lambda_i d_i^\star = D_e'$.
\begin{remark}
\label{remark3}
The program \eqref{eq22} can be written in the normalized variable
$D:=\Sigma_X^{-1/2}K_X\Sigma_X^{-1/2}$ as
\begin{equation}
\min_{0\prec D\preceq I}\;
\frac{1}{2}\log\frac{1}{\det D}
\qquad
\text{subject to}
\qquad
\mathrm{tr}(A_X D)\le D_e',
\label{eq95}
\end{equation}
where the effective normalized weight is
\begin{equation}
A_X:=\Sigma_X^{1/2}\tilde W\,\Sigma_X^{1/2}.
\label{eq96}
\end{equation}
Thus the eigen-geometry relevant to waterfilling is governed by $A_X$, not by
$\Sigma_X$ and $\tilde W$ separately. If $A_X$ has nonpositive eigenvalues,
those directions are never compressed at the rate-minimizing solution, because
setting the corresponding normalized posterior eigenvalues to one both reduces
rate and does not tighten the semantic constraint.
\end{remark}

\begin{corollary}
\label{cor4}
Let $A_X=U\Lambda U^\top$, where
\begin{equation}
\Lambda=\mathrm{diag}(\lambda_1,\dots,\lambda_k),
\label{eq97}
\end{equation}
and define the index sets
\begin{equation}
\mathcal I_+ := \{i:\lambda_i>0\},
\label{eq98}
\end{equation}
and
\begin{equation}
\mathcal I_0 := \{i:\lambda_i\le 0\}.
\label{eq99}
\end{equation}
Then there exists an optimizer of \eqref{eq22} of the form
\begin{equation}
K_X^\star
=
\Sigma_X^{1/2}U\,\mathrm{diag}(d_1^\star,\dots,d_k^\star)\,U^\top\Sigma_X^{1/2},
\label{eq100}
\end{equation}
with $0<d_i^\star\le 1$ and
\begin{equation}
d_i^\star = 1,
\label{eq101}
\end{equation}
for $i\in\mathcal I_0$, while for $i\in\mathcal I_+$
\begin{equation}
d_i^\star = \min\left\{1,\ \frac{\nu^\star}{\lambda_i}\right\}.
\label{eq102}
\end{equation}

If $D_e' \ge \mathrm{tr}(A_X)$, then $K_X^\star=\Sigma_X$ is feasible and
\begin{equation}
R(D_e)=0.
\label{eq103}
\end{equation}
If $D_e' < \mathrm{tr}(A_X)$, then the semantic constraint is active and $\nu^\star>0$
is uniquely determined by
\begin{equation}
\sum_{i\in\mathcal I_+}\min\{\lambda_i,\nu^\star\}
=
D_e' - \sum_{i\in\mathcal I_0}\lambda_i.
\label{eq104}
\end{equation}

Moreover, the solution is in the interior (no saturation on any $i\in\mathcal I_+$)
if and only if
\begin{equation}
\nu^\star < \min_{i\in\mathcal I_+}\lambda_i.
\label{eq105}
\end{equation}
In that interior regime,
\begin{equation}
\nu^\star
=
\frac{D_e' - \sum_{i\in\mathcal I_0}\lambda_i}{|\mathcal I_+|},
\label{eq106}
\end{equation}
and the resulting rate is
\begin{equation}
R(D_e)
=
\frac12\sum_{i\in\mathcal I_+}\log\left(\frac{\lambda_i}{\nu^\star}\right).
\label{eq107}
\end{equation}
\end{corollary}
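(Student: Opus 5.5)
The plan is to work entirely in the normalized program \eqref{eq95} of Remark~\ref{remark3}, then rotate into the eigenbasis of $A_X$. Write $D=U\tilde D U^\top$ with $\tilde D:=U^\top D U$. The map $D\mapsto\tilde D$ preserves $\det$ and the constraint $0\prec D\preceq I$, and it gives $\mathrm{tr}(A_XD)=\mathrm{tr}(\Lambda\tilde D)=\sum_i\lambda_i\tilde D_{ii}$.

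The first key step is to reduce to diagonal $\tilde D$ by Hadamard's inequality, $\det\tilde D\le\prod_i\tilde D_{ii}$. Replacing $\tilde D$ by $\mathrm{diag}(\tilde D_{11},\dots,\tilde D_{kk})$ leaves the constraint value unchanged. It preserves $0<\tilde D_{ii}\le 1$, since the diagonal entries of a matrix between $0$ and $I$ lie in $(0,1]$. It also does not increase the objective $-\tfrac12\log\det$. Hence some optimizer has the form \eqref{eq100}, and the problem becomes the scalar program
\begin{equation*}
\min_{0<d_i\le1}\ \tfrac12\sum_i\log\frac1{d_i}\quad\text{s.t.}\quad\sum_i\lambda_i d_i\le D_e'.
\end{equation*}

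For $i\in\mathcal I_0$, raising $d_i$ to $1$ lowers the objective and does not increase $\lambda_i d_i$, so \eqref{eq101} holds at some optimizer. If $D_e'\ge\mathrm{tr}(A_X)=\sum_i\lambda_i$, then $d\equiv1$, i.e.\ $K_X=\Sigma_X$, is feasible with zero rate, which gives \eqref{eq103}.

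Otherwise the constraint must be active. If it were slack at an optimizer, some $d_i<1$ with $i\in\mathcal I_+$ could be increased slightly, strictly lowering the objective. The remaining problem is convex: $-\log d_i$ is convex and the constraint is linear. Slater's condition holds by taking all $d_i$ small, provided $D_e'>\sum_{\mathcal I_0}\lambda_i$, which is needed for feasibility with $d_i>0$ and which I will state as implicit.

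The KKT conditions with multiplier $\mu>0$ for the semantic constraint and multipliers for the box constraints $d_i\le1$ give the stationarity relation $1/(2d_i)=\mu\lambda_i$ unless $d_i$ is clipped at $1$. Setting $\nu^\star:=1/(2\mu)$, this is exactly \eqref{eq102}. Substituting into the active constraint yields \eqref{eq104}, since $\lambda_i\min\{1,\nu/\lambda_i\}=\min\{\lambda_i,\nu\}$.

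Uniqueness of $\nu^\star$ follows because $\nu\mapsto\sum_{\mathcal I_+}\min\{\lambda_i,\nu\}$ is continuous and nondecreasing. It is strictly increasing on $(0,\max_i\lambda_i)$ and runs from $0$ up to $\sum_{\mathcal I_+}\lambda_i$. The right-hand side $D_e'-\sum_{\mathcal I_0}\lambda_i$ lies strictly inside $(0,\sum_{\mathcal I_+}\lambda_i)$ by the case assumption $D_e'<\mathrm{tr}(A_X)$ and the feasibility assumption above.

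For the interior characterization, no $d_i$ with $i\in\mathcal I_+$ saturates exactly when $\nu^\star/\lambda_i<1$ for all such $i$. This is \eqref{eq105}, with a boundary tie $\nu^\star=\lambda_i$ counted as saturation. In that regime \eqref{eq104} reads $|\mathcal I_+|\nu^\star=D_e'-\sum_{\mathcal I_0}\lambda_i$, which is \eqref{eq106}. The rate is $\tfrac12\sum_{\mathcal I_+}\log(\lambda_i/\nu^\star)$, since the $\mathcal I_0$ terms vanish, giving \eqref{eq107}.

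The main obstacle is rigor at the edges rather than the algebra. Three things need care: justifying the Hadamard reduction when $A_X$ has repeated or nonpositive eigenvalues, using the fact that the statement only claims existence of an optimizer; handling the closure $0\preceq K_X$ versus $0\prec D$; and showing that the infimum is attained, which follows by compactness once the $d_i$ are bounded below on the feasible set with finite objective.
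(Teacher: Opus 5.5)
Your proposal is correct and follows the route the paper intends. The paper gives no separate proof of Corollary~\ref{cor4}; it relies on the normalization and the ``nonpositive directions are never compressed'' argument of Remark~\ref{remark3}, plus the KKT waterfilling used for Corollary~\ref{cor1}. Your Hadamard step is the $1\times1$-block version of the determinant inequality in the proof of Theorem~\ref{thm4}, and it supplies the diagonalization in the eigenbasis of $A_X$ that the paper leaves implicit. You are also right that \eqref{eq104} has a positive solution $\nu^\star$ only under the unstated feasibility condition $D_e'>\sum_{i\in\mathcal I_0}\lambda_i$.
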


\begin{remark}
\label{remark4}
Even though $W_e\succ 0$, the induced matrix $\tilde W$ need not be positive semidefinite.
If $A_X$ has negative eigenvalues, then decreasing posterior uncertainty in those directions
can increase $\mathrm{tr}(\tilde W K_X)$, so the strategic RD curve need not be monotone
in the available rate. A sufficient condition that rules out this pathology is
\begin{equation}
\tilde W\succeq 0,
\label{eq108}
\end{equation}
equivalently $A_X\succeq 0$.
\end{remark}

\subsection{Remote Encoder: Observation of $\Theta$ Only}

Suppose the encoder observes $\Theta=BX+V$, where $X\sim\mathcal N(0,\Sigma_X)$,
$V\sim\mathcal N(0,\Sigma_V)$, and $V\perp X$, and the decoder best responds with
$\hat X=\mathbb E[X\mid M]$.

Let
\begin{equation}
\Sigma_\Theta = \mathrm{Cov}(\Theta)=B\Sigma_XB^\top+\Sigma_V.
\label{eq27}
\end{equation}
Define the linear Bayes regressions
\begin{equation}
L_X=\Sigma_XB^\top\Sigma_\Theta^{-1}.
\label{eq28}
\end{equation}
\begin{equation}
L_V=\Sigma_V\Sigma_\Theta^{-1}.
\label{eq29}
\end{equation}
Let $K_\Theta$ denote the posterior error covariance of $\Theta$:
\begin{equation}
K_\Theta=\Sigma_{\Theta\mid M}.
\label{eq30}
\end{equation}

Define the infinite-rate remote baseline distortion
\begin{equation}
D_{\infty}^{\mathrm{rem}}
=
\mathbb{E}\!\left[
\big(\Theta-\mathbb{E}[X\mid\Theta]\big)^\top
W_e
\big(\Theta-\mathbb{E}[X\mid\Theta]\big)
\right],
\label{eq31}
\end{equation}
and define
\begin{equation}
Q = L_X^\top\tilde W L_X + L_X^\top W_e L_V + L_V^\top W_e L_X.
\label{eq32}
\end{equation}

\begin{theorem}
\label{thm2}
The strategic remote rate distortion function is
\begin{equation}
R(D_e)
=
\min_{0\preceq K_\Theta \preceq \Sigma_\Theta}
\left\{
\frac12\log\frac{\det\Sigma_\Theta}{\det K_\Theta}
:\;
\mathrm{tr}(QK_\Theta)\le D_e-D_{\infty}^{\mathrm{rem}}
\right\}.
\label{eq33}
\end{equation}
Gaussian test channels $U=\Theta+Z$ (with $Z\perp\Theta$ Gaussian) achieve the minimum
for every feasible $K_\Theta$.
\end{theorem}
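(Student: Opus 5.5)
The plan is to reduce the remote problem to the direct one (Theorem~\ref{thm1}), with $\Theta$ now playing the role of the source. The key structural fact is the Markov chain $X - \Theta - M$, which holds because the encoder sees only $\Theta$. Since $(X,\Theta)$ is jointly Gaussian, $\mathbb E[X\mid\Theta]=L_X\Theta$. Combining this with the Markov property, for \emph{any} (possibly non-Gaussian) code the decoder's best response is
\begin{equation*}
\hat X=\mathbb E[X\mid M]=\mathbb E\big[\mathbb E[X\mid \Theta,M]\mid M\big]=\mathbb E\big[\mathbb E[X\mid\Theta]\mid M\big]=L_X\hat\Theta,
\end{equation*}
where $\hat\Theta:=\mathbb E[\Theta\mid M]$. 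This linearity is what makes the distortion depend on the code only through $K_\Theta$. I expect it to be the conceptual crux: it is exactly where the remote structure enters and where the Markov chain must be used carefully.

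Next I decompose the semantic error. Write $\tilde e:=\Theta-\hat\Theta$, so that $\mathrm{Cov}(\tilde e)=K_\Theta$ and, by orthogonality, $\mathbb E[\hat\Theta\tilde e^\top]=0$. Then
\begin{equation*}
\Theta-\hat X=(I-L_X)\Theta+L_X\tilde e .
\end{equation*}
The first term has weighted second moment $D_\infty^{\mathrm{rem}}$, since $\mathbb E[X\mid\Theta]=L_X\Theta$. Because $\mathbb E[\Theta\tilde e^\top]=K_\Theta$, the cross term contributes $2\,\mathrm{tr}\big(L_X^\top W_e(I-L_X)K_\Theta\big)$, and the last term contributes $\mathrm{tr}(L_X^\top W_eL_XK_\Theta)$.

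To match the stated $Q$, I use the identity
\begin{equation*}
BL_X+L_V=(B\Sigma_XB^\top+\Sigma_V)\Sigma_\Theta^{-1}=I,
\end{equation*}
so that $I-L_X=(B-I)L_X+L_V$. Substituting and symmetrizing inside the trace gives
\begin{equation*}
D_e=D_\infty^{\mathrm{rem}}+\mathrm{tr}(QK_\Theta),
\end{equation*}
with $Q$ as in \eqref{eq32}: the terms $L_X^\top W_e(B-I)L_X$, its transpose, and $L_X^\top W_eL_X$ combine into $L_X^\top\tilde WL_X$, and the $L_V$ terms give the rest. This is the analogue of Lemma~\ref{lemma1}. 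The bookkeeping in this step is routine but must be checked.

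For the converse, any $(n,R)$ code gives $nR\ge I(\Theta^n;M)$. By the standard single-letterization and the concavity of $\log\det$ (as in the proof of Lemma~\ref{lemma2}, applied to $\Theta\sim\mathcal N(0,\Sigma_\Theta)$), this yields
\begin{equation*}
R\ge\tfrac12\log\frac{\det\Sigma_\Theta}{\det K_\Theta}
\end{equation*}
for the time-averaged posterior covariance $K_\Theta$. Since the distortion is affine in $K_\Theta$, averaging over time preserves the decomposition above. Also $0\preceq K_\Theta\preceq\Sigma_\Theta$ automatically.

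For achievability, I fix a feasible $K_\Theta\succ0$ and invoke Lemma~\ref{lemma3} with $\Theta$ in place of $X$. This produces $U=\Theta+Z$ with $\Sigma_{\Theta\mid U}=K_\Theta$ at rate $\tfrac12\log(\det\Sigma_\Theta/\det K_\Theta)$. Standard block coding then yields rate and distortion arbitrarily close to these values, and a continuity argument handles the boundary cases $K_\Theta$ singular or $K_\Theta=\Sigma_\Theta$. Taking the infimum over feasible $K_\Theta$ gives \eqref{eq33}.
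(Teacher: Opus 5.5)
Your proposal is correct and follows the paper's proof essentially step for step: the tower property (Markov chain $X-\Theta-M$) gives $\hat X=L_X\hat\Theta$, the distortion is affine in $K_\Theta$, the converse applies Lemma~\ref{lemma2} to $\Theta$ after single-letterization, and achievability uses the Lemma~\ref{lemma3} Gaussian test channel $U=\Theta+Z$. The only difference is that you carry out the distortion computation explicitly, via $\Theta-\hat X=(I-L_X)\Theta+L_X\tilde e$ and the identity $BL_X+L_V=I$, where the paper only sketches it through the innovations $N_X,N_V$ and writes \emph{one can show}; your bookkeeping correctly recovers both $D_\infty^{\mathrm{rem}}$ and $Q$.
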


\begin{proof}
\label{appendix2}

Since $M=f_n(\Theta^n)$ and $|{\cal M}_n|\le \exp\{nr\}$,
\begin{equation*}
nR\ge H(M)\ge I(\Theta^n;M).
\end{equation*}
By standard single-letterization (time-sharing), there exists a single-letter pair $(\Theta,U)$
with $U=(M,Q)$ such that $\frac1n I(\Theta^n;M)=I(\Theta;U)$ and $K_\Theta=\Sigma_{\Theta\mid U}$.
Applying Lemma~\ref{lemma2} to $(\Theta,U)$ gives
\begin{equation*}
R \ge I(\Theta;U)\ge \frac12\log\frac{\det\Sigma_\Theta}{\det K_\Theta}.
\end{equation*}

Let $\hat\Theta:=\mathbb E[\Theta\mid M]$ and $e_\Theta:=\Theta-\hat\Theta$, so
$\mathrm{Cov}(e_\Theta)=K_\Theta$. Because $(X,V,\Theta)$ are jointly Gaussian, conditional
expectations are linear:
\begin{align*}
\mathbb E[X\mid\Theta]&=L_X\Theta,\\
\mathbb E[V\mid\Theta]&=L_V\Theta.
\end{align*}
Since $M$ is a function of $\Theta$, the tower property yields
\begin{equation*}
\hat X=\mathbb E[X\mid M]=L_X\hat\Theta.
\end{equation*}
Define innovations $N_X:=X-L_X\Theta$ and $N_V:=V-L_V\Theta$, which are independent of $\Theta$.
Then the MMSE error is $e:=X-\hat X=N_X+L_X e_\Theta$, and one can show
\begin{equation*}
D_e = D_{\infty}^{\mathrm{rem}} + \mathrm{tr}(QK_\Theta),
\end{equation*}
with $D_\infty^{\mathrm{rem}}$ and $Q$ defined in \eqref{eq31} and \eqref{eq32}.

\emph{Converse:} The preceding steps show that any achievable $(R,D_e)$ induces
$K_\Theta$ with $0\preceq K_\Theta\preceq\Sigma_\Theta$ satisfying the constraints in \eqref{eq33}.

\emph{Achievability:} Fix any $K_\Theta$ with $0\prec K_\Theta\preceq\Sigma_\Theta$. Define
$\Sigma_Z^{-1}:=K_\Theta^{-1}-\Sigma_\Theta^{-1}$ and choose $Z\sim\mathcal N(0,\Sigma_Z)$
independent of $\Theta$. With $U:=\Theta+Z$, Gaussian identities yield $\Sigma_{\Theta\mid U}=K_\Theta$
and $I(\Theta;U)=\frac12\log\frac{\det\Sigma_\Theta}{\det K_\Theta}$. Encoding according to this test
channel achieves distortion $D_{\infty}^{\mathrm{rem}}+\mathrm{tr}(QK_\Theta)$. Optimizing over feasible
$K_\Theta$ proves the theorem.

\end{proof}

\begin{corollary}
\label{cor1}
Consider the excess distortion above the infinite-rate remote baseline:
\begin{equation}
\Delta_{\mathrm{rem}}(R)
=
D_e^{(\mathrm{rem})}(R)-D_{\infty}^{\mathrm{rem}}.
\label{eq34}
\end{equation}
Assume $Q\succ 0$ and that $Q$ and $\Sigma_\Theta$ are simultaneously diagonalizable, i.e.,  there exists an orthogonal $U$ such that
\begin{equation}
U^\top \Sigma_\Theta U=\mathrm{diag}(\sigma_{\Theta,1}^2,\dots,\sigma_{\Theta,k}^2),
\label{eq35}
\end{equation}
and
\begin{equation}
U^\top Q\,U=\mathrm{diag}(q_1,\dots,q_k),
\label{eq36}
\end{equation}
with $q_i>0$. Then an optimal posterior covariance in Theorem~\ref{thm2} is diagonal in this basis,
$K_\Theta^\star(R)=U\,\mathrm{diag}(k_1^\star(R),\dots,k_k^\star(R))\,U^\top$, with
\begin{equation}
k_i^\star(R)
=
\min\left\{\sigma_{\Theta,i}^2,\ \frac{\nu^\star(R)}{q_i}\right\},
\label{eq37}
\end{equation}
where $\nu^\star(R)>0$ is chosen so that the rate constraint is active:
\begin{equation}
\sum_{i=1}^k \log\frac{\sigma_{\Theta,i}^2}{k_i^\star(R)}=2R.
\label{eq38}
\end{equation}

In the interior (no-saturation) regime, $k_i^\star(R)<\sigma_{\Theta,i}^2$ for all $i$,
we have
\begin{equation}
K_\Theta^\star(R)=\nu^\star(R)\,Q^{-1},
\label{eq39}
\end{equation}
with
\begin{equation}
\nu^\star(R)=\exp\!\left(\frac{1}{k}\left[\sum_{i=1}^k\log\!\big(q_i\sigma_{\Theta,i}^2\big)-2R\right]\right),
\label{eq40}
\end{equation}
and the excess distortion equals
\begin{equation}
\Delta_{\mathrm{rem}}(R)=\mathrm{tr}\!\left(QK_\Theta^\star(R)\right)=k\,\nu^\star(R).
\label{eq41}
\end{equation}
\end{corollary}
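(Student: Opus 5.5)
The plan is to start from Theorem~\ref{thm2} and pass to the dual form of its program. For a fixed rate $R$, the excess distortion $\Delta_{\mathrm{rem}}(R)$ is the minimum of $\mathrm{tr}(QK_\Theta)$ over $0\prec K_\Theta\preceq\Sigma_\Theta$ subject to $\frac12\log\det\Sigma_\Theta-\frac12\log\det K_\Theta\le R$. The objective is linear and $-\log\det$ is convex, so this is a convex program. Slater's condition holds for $R>0$, e.g.\ with $K_\Theta=(1-\epsilon)\Sigma_\Theta$, so KKT conditions are necessary and sufficient.

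\emph{Reduction to the diagonal case.} Conjugate by the orthogonal $U$ from \eqref{eq35}--\eqref{eq36} and write $\tilde K=U^\top K_\Theta U$. The objective becomes $\sum_i q_i\tilde K_{ii}$, which depends only on the diagonal of $\tilde K$. By Hadamard's inequality, $\det\tilde K\le\prod_i\tilde K_{ii}$, so replacing $\tilde K$ by its diagonal part does not increase the rate and leaves the objective unchanged. The constraint $\tilde K\preceq\mathrm{diag}(\sigma^2_{\Theta,i})$ implies $\tilde K_{ii}\le\sigma^2_{\Theta,i}$, and for a diagonal matrix that entrywise bound is equivalent to the Loewner bound. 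Hence the diagonal part is feasible, and it suffices to optimize over diagonal $k_i\in(0,\sigma^2_{\Theta,i}]$.

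\emph{Scalar waterfilling.} The reduced problem is to minimize $\sum_i q_ik_i$ subject to $\sum_i\log(\sigma^2_{\Theta,i}/k_i)\le 2R$ and $k_i\le\sigma^2_{\Theta,i}$. Substituting $y_i=\log k_i$ gives a convex separable objective with linear constraints. Stationarity of the Lagrangian with multiplier $\mu$ gives $q_ik_i=\mu/2=:\nu$ whenever the upper bound is inactive, and $k_i=\sigma^2_{\Theta,i}$ otherwise; together these give \eqref{eq37}.

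The rate constraint must be active whenever $R$ is below the rate needed to drive every coordinate to its unconstrained optimum. Since every $q_i>0$, shrinking any $k_i$ strictly lowers the objective, so unused rate could always be spent profitably. This yields \eqref{eq38}. The left side of \eqref{eq38} is continuous and nonincreasing in $\nu$, strictly decreasing wherever some coordinate is unsaturated, so $\nu^\star(R)$ is uniquely determined.

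\emph{Interior regime.} When no coordinate saturates, $k_i=\nu/q_i$ for every $i$, and transforming back gives $K_\Theta^\star=\nu U\,\mathrm{diag}(1/q_i)U^\top=\nu Q^{-1}$, which is \eqref{eq39}. Plugging into \eqref{eq38} gives $\sum_i\log(q_i\sigma^2_{\Theta,i}/\nu)=2R$, which solves to \eqref{eq40}. The excess distortion is then $\mathrm{tr}(QK^\star_\Theta)=\mathrm{tr}(\nu I)=k\nu$, which is \eqref{eq41}.

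The main obstacle is the diagonal-reduction step: one must verify that the Loewner constraint survives taking the diagonal part in the eigenbasis. It does, because the eigenbasis is shared with $\Sigma_\Theta$, so the upper bound becomes an entrywise bound on the diagonal. A secondary concern is justifying that the rate constraint is active at the optimum, which follows from $q_i>0$ as argued above.
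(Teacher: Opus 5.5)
Your proposal is correct and follows the same route as the paper: recast Theorem~\ref{thm2} as minimizing $\mathrm{tr}(QK_\Theta)$ under the log-det rate budget, pass to the common eigenbasis, and solve the resulting scalar program by KKT with an active rate constraint, then specialize to the interior regime. Your Hadamard-inequality argument for restricting to diagonal $K_\Theta$, together with the Slater, activeness and uniqueness checks, supplies justification that the paper's proof takes for granted when it simply writes $K_\Theta=U\,\mathrm{diag}(k_i)\,U^\top$.
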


\begin{proof}
\label{appendix3}

For fixed rate $R$, Theorem~\ref{thm2} is equivalent (up to the constant $D_{\infty}^{\mathrm{rem}}$) to
\begin{equation*}
\min_{0\prec K_\Theta\preceq \Sigma_\Theta}\ \mathrm{tr}(QK_\Theta)
\quad\text{s.t.}\quad
\frac12\log\frac{\det\Sigma_\Theta}{\det K_\Theta}\le R.
\end{equation*}
Equivalently, the constraint is $\log\det K_\Theta \ge \log\det\Sigma_\Theta - 2R$.
Under simultaneous diagonalization, write $K_\Theta=U\,\mathrm{diag}(k_i)\,U^\top$.
Then the problem reduces to
\begin{equation*}
\min_{0<k_i\le\sigma_{\Theta,i}^2}\ \sum_{i=1}^k q_i k_i
\quad\text{s.t.}\quad
\sum_{i=1}^k \log k_i \ge \sum_{i=1}^k \log\sigma_{\Theta,i}^2 - 2R.
\end{equation*}
The KKT conditions yield $k_i^\star=\min\{\sigma_{\Theta,i}^2,\nu/q_i\}$ with an active log constraint,
giving \eqref{eq37}--\eqref{eq41}.

\end{proof}

\begin{remark}
\label{remark5}
Corollary~\ref{cor1} assumes $Q\succ 0$ and simultaneous diagonalization with $\Sigma_\Theta$
to express the solution in the scalar parameters $\{q_i,\sigma_{\Theta,i}^2\}$.
Without this commutativity assumption, the same reverse-waterfilling structure holds
in the eigenbasis of the symmetric matrix
\begin{equation}
A_\Theta := \Sigma_\Theta^{1/2}Q\,\Sigma_\Theta^{1/2}.
\label{eq109}
\end{equation}
If $A_\Theta$ has nonpositive eigenvalues, the corresponding normalized posterior
eigenvalues saturate at one (no compression in those directions), and rate is allocated
only across the positive-eigenvalue subspace.
\end{remark}

\subsection{Full-Information Encoder: Rate-Constrained Gaussian Persuasion}

When the encoder observes $(X,\Theta)$, it can also manipulate posterior correlations
between the state $X$ and semantic noise $V$, introducing strategic degrees of freedom. 
%absent in classical rate--distortion theory.
%
%An operational rate-$R$ code constrains the joint informativeness of the message:
%\begin{equation}
%I(X^n,\Theta^n;M)\le nR.
%\label{eq42}
%\end{equation}
%Equivalently, since $\Theta=BX+V$ is invertible in $(X,V)$, the single-letter rate constraint
%can be stated for the augmented Gaussian source $W:=(X,V)$.
Recall $X\sim\mathcal N(0,\Sigma_X)$ and $V\sim\mathcal N(0,\Sigma_V)$ are independent, and 
$\Theta=BX+V$. In the full-information regime the encoder observes $(X,\Theta)$ and sends a
rate-limited message $M$. Define
\begin{equation}
W=
\begin{bmatrix}
X\\
V
\end{bmatrix},
\label{eq43}
\end{equation}
with prior covariance
\begin{equation}
\Sigma_W=\mathrm{blkdiag}(\Sigma_X,\Sigma_V),
\label{eq44}
\end{equation}
and posterior error covariance
\begin{equation}
K_W=\Sigma_{W\mid M}.
\label{eq45}
\end{equation}

\begin{theorem}
\label{thm3}

The strategic rate--distortion function when the encoder observes $X, \Theta$ is
\begin{equation}
R(D_e)
=
\min_{0\preceq K_W \preceq\Sigma_W}
\left\{
\frac12\log\frac{\det\Sigma_W}{\det K_W}
:\;
\mathrm{tr}(\bar W K_W) \le D_e'
\right\},
\label{eq122}
\end{equation}
where $D_e' := D_e - \mathrm{tr}(W_e C_0)$ and 

%\begin{equation}
%0\preceq K_W\preceq \Sigma_W,
%\label{eq46}
%\end{equation}
%together with the entropy budget
%\begin{equation}
%\log\det K_W \ge \log\det\Sigma_W - 2R.
%\label{eq47}
%\end{equation}
%Moreover, the semantic distortion admits the affine representation
%\begin{equation}
%D_e=\mathrm{tr}(W_e C_0)+\mathrm{tr}(\bar W K_W),
%\label{eq48}
%\end{equation}
%where
\begin{equation}
\bar W=\begin{bmatrix}\tilde W & W_e\\ W_e & 0\end{bmatrix}.
\label{eq49}
\end{equation}
and is achievable by a Gaussian test channel of the form
$U=W+Z$ where $Z$ is Gaussian.
\end{theorem}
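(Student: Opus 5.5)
The plan follows the template of Theorem~\ref{thm1}, but works with the augmented Gaussian source $W=(X,V)$ instead of $X$. Three observations drive it. First, the map $(X,\Theta)\mapsto(X,\Theta-BX)=(X,V)=W$ is an invertible linear bijection. Hence any code $f_n$ operating on $(X^n,\Theta^n)$ is equivalently a code operating on $W^n$. In particular $I(X^n,\Theta^n;M)=I(W^n;M)\le H(M)\le nR$. Second, the encoder's distortion will turn out to be an affine function of $K_W$ alone. Third, Lemmas~\ref{lemma2} and~\ref{lemma3} apply verbatim to the Gaussian vector $W\sim\mathcal N(0,\Sigma_W)$, since $\Sigma_W\succ0$ whenever $\Sigma_V\succ0$.

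\textbf{Step 1 (distortion identity).} I will show that for any message $M$,
\begin{equation*}
D_e=\mathrm{tr}(W_eC_0)+\mathrm{tr}(\bar W K_W).
\end{equation*}
Write $\hat W=\mathbb E[W\mid M]=(\hat X,\hat V)$ and $e_W=(e_X,e_V)=W-\hat W$. Partition $K_W$ into blocks $K_{XX},K_{XV},K_{VX},K_{VV}$. As in Lemma~\ref{lemma1},
\begin{equation*}
\Theta-\hat X=(B-I)\hat X+Be_X+V,
\end{equation*}
but now $V$ is no longer independent of $M$, so cross terms survive. The key computations are
\begin{equation*}
\mathbb E[\hat XV^\top]=\mathbb E[\hat X\hat V^\top],\qquad \mathbb E[e_XV^\top]=K_{XV}.
\end{equation*}
The prior independence $\mathbb E[XV^\top]=0$ then forces $\mathbb E[\hat X\hat V^\top]=-K_{XV}$. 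Collecting terms, the cross contribution is
\begin{equation*}
(B-I)(-K_{XV})+BK_{XV}=K_{XV},
\end{equation*}
plus its transpose. Together with the diagonal terms already computed in Lemma~\ref{lemma1} (with $K_X$ replaced by $K_{XX}$), this gives
\begin{equation*}
\mathrm{Cov}(\Theta-\hat X)=C_0+BK_{XX}+K_{XX}B^\top-K_{XX}+K_{XV}+K_{VX}.
\end{equation*}
Taking $\mathrm{tr}(W_e\,\cdot)$ yields
\begin{equation*}
\mathrm{tr}(\tilde WK_{XX})+\mathrm{tr}(W_eK_{VX})+\mathrm{tr}(W_eK_{XV})=\mathrm{tr}(\bar WK_W).
\end{equation*}
This explains the zero lower-right block of $\bar W$: $K_{VV}$ never enters the distortion. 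For block-length codes the same identity holds per letter, and averaging gives the same expression with $K_W$ the time-averaged error covariance.

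\textbf{Step 2 (converse).} Use the single-letterization already employed in the proof of Theorem~\ref{thm2}: set $U=(M,T)$ with $T$ a uniform time index, so that $K_W=\Sigma_{W\mid U}$ is the averaged error covariance. Then
\begin{equation*}
R\ge\tfrac1nI(W^n;M)\ge I(W_T;U),
\end{equation*}
using the memorylessness of the source and the chain rule. Lemma~\ref{lemma2} then gives
\begin{equation*}
I(W_T;U)\ge\tfrac12\log\frac{\det\Sigma_W}{\det K_W}.
\end{equation*}
We also have $0\preceq K_W\preceq\Sigma_W$, since the error covariance is dominated by the prior. Combined with Step~1, every achievable pair lies in the feasible set of \eqref{eq122}.

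\textbf{Step 3 (achievability).} For feasible $K_W\succ0$, define
\begin{equation*}
\Sigma_Z^{-1}=K_W^{-1}-\Sigma_W^{-1}\succeq0
\end{equation*}
and $U=W+Z$ with $Z$ Gaussian and independent of $W$, as in Lemma~\ref{lemma3}. This gives $\Sigma_{W\mid U}=K_W$ and $I(W;U)=\tfrac12\log(\det\Sigma_W/\det K_W)$. Standard Gaussian random coding with this test channel achieves rate close to $I(W;U)$ while the decoder's MMSE posterior covariance approaches $K_W$. By Step~1 the distortion then approaches $\mathrm{tr}(W_eC_0)+\mathrm{tr}(\bar WK_W)$.

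\textbf{Main obstacle.} I expect the main obstacle to be this last operational claim, together with boundary cases. Step~1 requires the decoder's exact MMSE response to the coded message, not to the single-letter $U$, so one must argue that the induced error covariance converges to $K_W$. I would handle this as the paper does in Theorem~\ref{thm1}, by appealing to standard block-coding arguments for Gaussian test channels. The boundary cases are singular $K_W$ and singular $\Sigma_V$. For singular $K_W$, I would restrict to $K_W\succ0$ and pass to the infimum by continuity, since the rate is $+\infty$ anyway when $\det K_W=0$. For singular $\Sigma_V$, I would work on the support of $V$.
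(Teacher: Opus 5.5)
Your proposal is correct and follows essentially the same route as the paper. The steps match: the invertible reduction to $W=(X,V)$; the orthogonality argument giving $\mathrm{Cov}(\hat X,V)=-K_{XV}$ and hence $D_e=\mathrm{tr}(W_eC_0)+\mathrm{tr}(\bar W K_W)$; time-sharing single-letterization with Lemma~\ref{lemma2} for the converse; and a Gaussian test channel $U=W+Z$ for achievability. Your explicit use of Lemma~\ref{lemma3} on $W$ and your per-letter averaging are, if anything, cleaner than the paper's achievability paragraph.

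The operational obstacle you flag can be closed by a sandwich argument. The true averaged MMSE covariance given $M$ is Loewner-dominated by that of the symbol-wise estimator $\mathbb E[W\mid U_t]$, which tends to $K_W$. Meanwhile, your own converse forces its log-det to be at least $\log\det K_W-2\delta$ at rate $I(W;U)+\delta$, so it must converge to $K_W$.
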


\begin{proof}

\label{appendix4}

Since $V=\Theta-BX$ is a deterministic invertible linear transform of $(X,\Theta)$, the encoder
effectively observes $(X,V)$, and for any message $M$,
\begin{equation*}
I(X,\Theta;M)=I(X,V;M)=I(W;M).
\end{equation*}

Because $|{\cal M}_n|\le \exp\{nr\}$ we have $H(M)\le nR$ and hence
\begin{equation*}
I(W^n;M)\le H(M)\le nR.
\end{equation*}
We single-letterize as in standard RD converses. Using the chain rule and
conditioning reduces entropy, for i.i.d.\ $W_t$,
\begin{equation*}
I(W^n;M)
=
\sum_{t=1}^n I(W_t;M\mid W^{t-1})
\ge
\sum_{t=1}^n I(W_t;M).
\end{equation*}
Let $Q\sim\mathrm{Unif}\{1,\dots,n\}$ be independent of everything and define
$W:=W_Q$ and $U:=(M,Q)$. Then
\begin{align*}
I(W;U)
&=I(W_Q;M,Q)\\
&=\frac{1}{n}\sum_{t=1}^n I(W_t;M)\\
&\le \frac{1}{n}I(W^n;M)\\
&\le R.
\end{align*}
Apply Lemma~\ref{lemma2} to the Gaussian vector $W$ and auxiliary $U$:
\begin{equation*}
I(W;U)\ge\frac12\log\frac{\det\Sigma_W}{\det \Sigma_{W\mid U}}.
\end{equation*}
Let $K_W:=\Sigma_{W\mid U}$. Since $U=(M,Q)$, this is exactly the per-letter posterior
error covariance induced by the code after time-sharing. Combining yields
\begin{equation*}
R\ge \frac12\log\frac{\det\Sigma_W}{\det K_W},
\end{equation*}
equivalently,
\begin{equation*}
\log\det K_W \ge \log\det\Sigma_W - 2R.
\end{equation*}
Finally, by the law of total covariance, $0\preceq K_W\preceq \Sigma_W$.

\smallskip

Let $\hat X:=\mathbb E[X\mid M]$ and define the MMSE error $e:=X-\hat X$.
Also define $\hat V:=\mathbb E[V\mid M]$ and $v:=V-\hat V$. Then
\begin{equation*}
K_W
=
\Sigma_{W\mid M}
=
\mathrm{Cov}\!\left(\begin{bmatrix}e\\v\end{bmatrix}\right)
=
\begin{bmatrix}
K_X & K_{XV}\\
K_{VX} & K_V
\end{bmatrix},
\end{equation*}
where $K_X=\mathrm{Cov}(e)$ and $K_{XV}=\mathrm{Cov}(e,v)$.

By the orthogonality principle, $e$ is orthogonal to any measurable function of $M$, hence
$e\perp \hat V$ and therefore
\begin{equation*}
\mathrm{Cov}(e,V)=\mathrm{Cov}(e,\hat V+v)=\mathrm{Cov}(e,v)=K_{XV}.
\end{equation*}
Since $\mathrm{Cov}(X,V)=0$ by independence, we also have
\begin{align*}
0=\mathrm{Cov}(X,V)
=\mathrm{Cov}(\hat X+e,V)
=\mathrm{Cov}(\hat X,V)+\mathrm{Cov}(e,V),
\end{align*}
which yields
\begin{equation*}
\mathrm{Cov}(\hat X,V)=-K_{XV}.
\end{equation*}

Next,
\begin{equation*}
\Theta-\hat X = BX+V-\hat X = (B-I)\hat X + Be + V.
\end{equation*}
Using $e\perp \hat X$, the covariance is
\begin{align*}
\mathrm{Cov}(\Theta-\hat X)
&=(B-I)\mathrm{Cov}(\hat X)(B-I)^\top + B K_X B^\top + \Sigma_V \\
&\quad + (B-I)\mathrm{Cov}(\hat X,V)+\mathrm{Cov}(V,\hat X)(B-I)^\top \\
&\quad + B\mathrm{Cov}(e,V)+\mathrm{Cov}(V,e)B^\top.
\end{align*}
Using $\mathrm{Cov}(\hat X)=\Sigma_X-K_X$, $\mathrm{Cov}(\hat X,V)=-K_{XV}$, and
$\mathrm{Cov}(e,V)=K_{XV}$, the cross terms simplify to
\begin{equation*}
-(B-I)K_{XV}-K_{VX}(B-I)^\top + BK_{XV}+K_{VX}B^\top = K_{XV}+K_{VX}.
\end{equation*}
The $K_X$-terms simplify as in Lemma~\ref{lemma1}:
\begin{align*}
(B-I)(\Sigma_X-K_X)(B-I)^\top + BK_XB^\top
&=
(B-I)\Sigma_X(B-I)^\top + \big(BK_X+K_XB^\top-K_X\big).
\end{align*}
Hence
\begin{equation*}
\mathrm{Cov}(\Theta-\hat X)
=
C_0 + \big(BK_X+K_XB^\top-K_X\big) + (K_{XV}+K_{VX}).
\end{equation*}
Therefore,
\begin{align*}
D_e
&=\mathrm{tr}\!\left(W_e\,\mathrm{Cov}(\Theta-\hat X)\right)\\
&=\mathrm{tr}(W_eC_0)
  +\mathrm{tr}\!\left((W_eB+B^\top W_e-W_e)K_X\right)
  +\mathrm{tr}\!\left(W_e(K_{XV}+K_{VX})\right).
\end{align*}
Defining $\tilde W:=W_eB+B^\top W_e-W_e$ and
\begin{equation*}
\bar W:=\begin{bmatrix}\tilde W & W_e\\ W_e & 0\end{bmatrix}, 
\end{equation*}
we have:
\begin{equation*}
D_e
=
\mathrm{tr}(W_eC_0)+\mathrm{tr}(\bar W K_W). 
\end{equation*}

\smallskip

\emph{Converse:} The preceding analysis shows that any achievable $(R,D_e)$ induces a matrix $K_W$
satisfying $0\preceq K_W\preceq\Sigma_W$ and $\log\det K_W\ge\log\det\Sigma_W-2R$, and achieves
$D_e=\mathrm{tr}(W_eC_0)+\mathrm{tr}(\bar W K_W)$.

\emph{Achievability:} Let $Q\sim\mathrm{Unif}\{1,\dots,n\}$ independent of everything, define $W:=W_Q$ and $U:=(M,Q)$.
Then
\[
I(W;U)=I(W_Q;M,Q)\le I(W^n;M)\le H(M)\le nR,
\]
hence $I(W;U)\le R$.
The decoder
reconstructs $U^n$ and forms $\hat X^n=\mathbb E[X^n\mid U^n]$, which induces the desired posterior
error covariance $K_W$ and achieves distortion $\mathrm{tr}(W_eC_0)+\mathrm{tr}(\bar W K_W)$.
Optimizing over feasible $K_W$ proves the theorem. 
\end{proof}

When $\Sigma_X,\Sigma_V,B$, and $W_e$ share an eigenbasis, the covariance-design program from
Theorem~\ref{thm3} decouples across eigen-directions. Each direction reduces to a two-dimensional
posterior-design problem for the pair $(x_i,v_i)$, with a determinant budget allocated by a
scalar waterfilling law.

\begin{assumption}
\label{assumption1}
There exists an orthogonal matrix $U$ such that
\begin{align}
U^\top\Sigma_XU&=\mathrm{diag}(\sigma_1^2,\dots,\sigma_k^2),
\label{eq50}\\
U^\top\Sigma_VU&=\mathrm{diag}(\tau_1^2,\dots,\tau_k^2),
\label{eq51}\\
U^\top BU&=\mathrm{diag}(b_1,\dots,b_k),
\label{eq52}\\
U^\top W_eU&=\mathrm{diag}(w_1,\dots,w_k),
\label{eq53}
\end{align}
with $\sigma_i>0$, $\tau_i>0$, and $w_i>0$ for all $i$.
\end{assumption}
\begin{remark}
The condition $\tau_i>0$ ensures $\log\det\Sigma_V$ is well-defined and guarantees that
each $2\times 2$ block in Theorem~\ref{thm4} has one positive and one negative eigenvalue.
\end{remark}

\begin{theorem}
\label{thm4}
Under Assumption~\ref{assumption1}, there exists an optimal posterior error covariance
$K_W^\star(R)$ that is block-diagonal across $i$ in the basis
$\bar U:=\mathrm{blkdiag}(U,U)$:
\begin{equation}
K_W^\star(R)=\bar U\ \mathrm{blkdiag}\!\big(K_1^\star(R),\dots,K_k^\star(R)\big)\ \bar U^\top,
\label{eq54}
\end{equation}
with $2\times2$ blocks
\begin{equation}
K_i^\star(R)=D_i\,\tilde K_i^\star(R)\,D_i,
\label{eq55}
\end{equation}
where
\begin{equation}
D_i=\mathrm{diag}(\sigma_i,\tau_i).
\label{eq56}
\end{equation}
Each normalized block can be written as
\begin{equation}
\tilde K_i^\star(R)=R_i\,\mathrm{diag}(\delta_i^\star(R),1)\,R_i^\top,
\label{eq57}
\end{equation}
where $0<\delta_i^\star(R)\le 1$ and $R_i$ diagonalizes
\begin{equation}
A_i=
\begin{bmatrix}
\tilde w_i\,\sigma_i^2 & w_i\,\sigma_i\tau_i\\
w_i\,\sigma_i\tau_i & 0
\end{bmatrix},
\label{eq58}
\end{equation}
with
\begin{equation}
\tilde w_i=w_i(2b_i-1).
\label{eq59}
\end{equation}
Writing $R_i^\top A_i R_i=\mathrm{diag}(a_i^+,a_i^-)$ with $a_i^+>0>a_i^-$, we have
\begin{equation}
a_i^+=\frac12\left(\tilde w_i\sigma_i^2+\sqrt{\tilde w_i^2\sigma_i^4+4w_i^2\sigma_i^2\tau_i^2}\right).
\label{eq60}
\end{equation}

The scalars $\delta_i^\star(R)$ satisfy the waterfilling law
\begin{equation}
\delta_i^\star(R)=\min\!\left\{1,\ \frac{\eta^\star(R)}{a_i^+}\right\},
\label{eq61}
\end{equation}
where $\eta^\star(R)>0$ is chosen so that the normalized log-det constraint is active:
\begin{equation}
\sum_{i=1}^k \log \delta_i^\star(R)=-2R.
\label{eq62}
\end{equation}
\end{theorem}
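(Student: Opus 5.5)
We plan to start from the covariance program of Theorem~\ref{thm3}. By the usual rate--distortion duality, for fixed $R$ it suffices to minimize $\mathrm{tr}(\bar W K_W)$ over $0\preceq K_W\preceq\Sigma_W$ subject to $\log\det K_W\ge\log\det\Sigma_W-2R$.

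\textbf{Change of basis and normalization.} Conjugate by $\bar U=\mathrm{blkdiag}(U,U)$. By Assumption~\ref{assumption1}, $\Sigma_W$, $\tilde W=W_eB+B^\top W_e-W_e$ and $W_e$ all become diagonal, with $\tilde w_i=w_i(2b_i-1)$. Next permute coordinates so that each pair $(x_i,v_i)$ is adjacent. Then set
\[
\tilde K:=\Sigma_W^{-1/2}K_W\Sigma_W^{-1/2},
\]
with $\Sigma_W^{1/2}=\mathrm{blkdiag}(D_i)$. This turns the program into minimizing $\mathrm{tr}(A\tilde K)$ subject to $0\preceq\tilde K\preceq I$ and $\log\det\tilde K\ge-2R$. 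A direct multiplication shows that $A=\Sigma_W^{1/2}\bar W\Sigma_W^{1/2}$ equals $\mathrm{blkdiag}(A_1,\dots,A_k)$, with $A_i$ as in \eqref{eq58}.

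\textbf{Reduction to block-diagonal $\tilde K$.} Given any feasible $\tilde K$, replace it by its pinching $P(\tilde K)$, i.e.\ its $2\times2$ block-diagonal part.
\begin{itemize}
\item The objective is unchanged, because $A$ is block diagonal.
\item Pinching is a positive unital map, so $0\preceq P(\tilde K)\preceq I$ still holds.
\item By Fischer's inequality, $\det P(\tilde K)\ge\det\tilde K$, so the log-det constraint is preserved.
\end{itemize}
Hence some optimizer is block diagonal. Undoing the normalization gives \eqref{eq54}--\eqref{eq55}.

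\textbf{Structure of each $2\times2$ block.} Since $\det A_i=-w_i^2\sigma_i^2\tau_i^2<0$ (using $\tau_i,w_i>0$), $A_i$ has eigenvalues $a_i^+>0>a_i^-$. Formula \eqref{eq60} follows from the quadratic formula, because $\mathrm{tr}A_i=\tilde w_i\sigma_i^2$. Fix block $i$ and let $\tilde K_i$ have eigenvalues $\mu_1\le\mu_2$ in $(0,1]$. By the von Neumann trace inequality,
\[
\mathrm{tr}(A_i\tilde K_i)\ge a_i^+\mu_1+a_i^-\mu_2,
\]
with equality when $\tilde K_i$ is diagonal in the eigenbasis $R_i$ of $A_i$, its smaller eigenvalue aligned with the $a_i^+$ direction. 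Because $a_i^-<0$, raising $\mu_2$ to $1$ lowers the objective and only increases $\log\det$. So without loss of optimality,
\[
\tilde K_i=R_i\,\mathrm{diag}(\delta_i,1)\,R_i^\top ,
\]
which is \eqref{eq57}, and the block contributes $a_i^+\delta_i+a_i^-$ to the objective.

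\textbf{Scalar waterfilling.} The program collapses to minimizing $\sum_i a_i^+\delta_i$ over $0<\delta_i\le1$ subject to $\sum_i\log\delta_i\ge-2R$. This is a convex problem with positive weights. The constraint is active for $R>0$, since otherwise some $\delta_i$ could be decreased. The KKT conditions give $\delta_i=\min\{1,\eta/a_i^+\}$, with $\eta>0$ fixed by \eqref{eq62}. This $\eta$ exists and is unique because $\eta\mapsto\sum_i\log\min\{1,\eta/a_i^+\}$ is continuous and strictly increasing until it reaches $0$.

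I expect the main obstacle to be the per-block step: justifying via the trace inequality that the optimal block shares its eigenvectors with $A_i$, and handling the indefinite sign of $A_i$ correctly. The pinching argument also needs care, because after the permutation the blocks interleave the $X$ and $V$ coordinates.
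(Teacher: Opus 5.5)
Your proposal is correct and follows essentially the same route as the paper: rotate by $\bar U$ and normalize by $\Sigma_W^{1/2}$ to obtain $A=\mathrm{blkdiag}(A_1,\dots,A_k)$, reduce to a block-diagonal $\tilde K$ via Fischer's inequality, align each block with the eigenbasis of $A_i$, saturate the eigenvalue paired with $a_i^-<0$ at one, and solve the resulting scalar waterfilling program by KKT. The points the paper only asserts, which you justify explicitly (pinching is positive and unital so $P(\tilde K)\preceq I$, von Neumann's trace inequality for the eigenvector alignment, the coordinate permutation, and existence of $\eta$), refine the same argument rather than change it.
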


\begin{proof}
\label{appendix5}

Define $\bar U:=\mathrm{blkdiag}(U,U)$ and transform
\begin{align*}
K'&=\bar U^\top K_W\bar U,\\
\Sigma'&=\bar U^\top \Sigma_W\bar U.
\end{align*}
Under Assumption~\ref{assumption1}, we have
\begin{equation*}
\Sigma'=\mathrm{blkdiag}(\Sigma_X',\Sigma_V')
=
\mathrm{blkdiag}\!\left(\mathrm{diag}(\sigma_i^2),\mathrm{diag}(\tau_i^2)\right).
\end{equation*}
Because trace and determinant are invariant under orthogonal transforms, the covariance-design
problem in Theorem~\ref{thm3} is equivalent to
\begin{equation*}
\min_{0\preceq K'\preceq \Sigma'}\ \mathrm{tr}(\bar W' K')
\quad\text{s.t.}\quad
\log\det K' \ge \log\det\Sigma' - 2R,
\end{equation*}
where $\bar W':=\bar U^\top \bar W\,\bar U$ is block diagonal across $i$ with $2\times 2$ blocks
\begin{equation*}
\bar W_i'=\begin{bmatrix}\tilde w_i & w_i\\ w_i & 0\end{bmatrix},
\qquad
\tilde w_i:=w_i(2b_i-1).
\end{equation*}

Normalize by $\Sigma'$: define $\tilde K := \Sigma'^{-1/2}K'\Sigma'^{-1/2}$. Then
$0\preceq K'\preceq \Sigma'$ becomes $0\preceq \tilde K\preceq I$, and
\begin{align*}
\log\det K'
&=\log\det\Sigma' + \log\det\tilde K,\\
\mathrm{tr}(\bar W'K')
&=\mathrm{tr}\!\left(\Sigma'^{1/2}\bar W'\Sigma'^{1/2}\ \tilde K\right).
\end{align*}
Let
\begin{equation*}
A:=\Sigma'^{1/2}\bar W'\Sigma'^{1/2},
\end{equation*}
which is block diagonal across $i$ with blocks
\begin{equation*}
A_i=
\begin{bmatrix}
\tilde w_i\,\sigma_i^2 & w_i\,\sigma_i\tau_i\\
w_i\,\sigma_i\tau_i & 0
\end{bmatrix}.
\end{equation*}
Thus the normalized problem is
\begin{equation*}
\min_{0\preceq \tilde K\preceq I}\ \mathrm{tr}(A\tilde K)
\quad\text{s.t.}\quad
\log\det\tilde K \ge -2R.
\end{equation*}

We next reduce it to $k$ independent $2\times 2$ blocks. Partition $\tilde K$ into $k\times k$
blocks of size $2\times 2$:
\begin{equation*}
\tilde K=[\tilde K_{ij}]_{i,j=1}^k.
\end{equation*}
Since $A$ is block diagonal,
\begin{equation*}
\mathrm{tr}(A\tilde K)=\sum_{i=1}^k \mathrm{tr}(A_i\tilde K_{ii}),
\end{equation*}
so off-diagonal blocks do not affect the objective.

For feasibility, note that $\tilde K\succeq 0$ implies the determinant inequality
\begin{equation*}
\det(\tilde K)\le \prod_{i=1}^k \det(\tilde K_{ii}).
\end{equation*}
Define the block-diagonal matrix $\tilde K^{\mathrm{bd}}:=\mathrm{blkdiag}(\tilde K_{11},\dots,\tilde K_{kk})$.
Then $\tilde K^{\mathrm{bd}}\preceq I$ and
\begin{equation*}
\det(\tilde K^{\mathrm{bd}})=\prod_{i=1}^k \det(\tilde K_{ii})\ge \det(\tilde K).
\end{equation*}
Hence $\tilde K^{\mathrm{bd}}$ is feasible whenever $\tilde K$ is feasible, and it achieves
the same objective value. Therefore, there exists an optimal solution that is block diagonal:
\begin{equation*}
\tilde K^\star=\mathrm{blkdiag}(\tilde K_1^\star,\dots,\tilde K_k^\star),
\qquad
0\preceq \tilde K_i^\star\preceq I.
\end{equation*}
The constraint becomes
\begin{equation*}
\log\det\tilde K^\star=\sum_{i=1}^k \log\det(\tilde K_i^\star)\ge -2R.
\end{equation*}

\smallskip

Fix $i$ and take the eigen-decomposition
\begin{equation*}
A_i=R_i\,\mathrm{diag}(a_i^+,a_i^-)\,R_i^\top,
\qquad
a_i^+>0>a_i^-.
\end{equation*}
The eigenvalues are
\begin{equation*}
a_i^\pm
=
\frac{1}{2}\left(
\tilde w_i\sigma_i^2
\pm
\sqrt{\tilde w_i^2\sigma_i^4 + 4w_i^2\sigma_i^2\tau_i^2}
\right),
\end{equation*}
so indeed $a_i^+a_i^-=\det(A_i)=-w_i^2\sigma_i^2\tau_i^2<0$.

Let $\tilde K_i$ be any feasible $2\times2$ PSD matrix with eigenvalues
$0<\kappa_{i,1}\le \kappa_{i,2}\le 1$. For fixed eigenvalues $(\kappa_{i,1},\kappa_{i,2})$,
the minimum of $\mathrm{tr}(A_i\tilde K_i)$ is attained by aligning the larger eigenvalue
$\kappa_{i,2}$ with the negative eigenvalue $a_i^-$. Thus we may restrict attention to
\begin{equation*}
\tilde K_i=R_i\,\mathrm{diag}(\kappa_{i,1},\kappa_{i,2})\,R_i^\top,
\end{equation*}
which yields cost $a_i^+\kappa_{i,1}+a_i^-\kappa_{i,2}$.

Since $a_i^-<0$, increasing $\kappa_{i,2}$ strictly decreases the objective and increases the
determinant $\det(\tilde K_i)=\kappa_{i,1}\kappa_{i,2}$, which preserves feasibility under a
lower bound on the determinant. Therefore at any optimum we must have $\kappa_{i,2}=1$.
Denote
\begin{equation*}
\delta_i:=\kappa_{i,1}=\det(\tilde K_i)\in(0,1].
\end{equation*}
Hence an optimal block has the form
\begin{equation*}
\tilde K_i^\star = R_i\,\mathrm{diag}(\delta_i^\star,1)\,R_i^\top,
\qquad \delta_i^\star\in(0,1],
\end{equation*}
and contributes cost $\mathrm{tr}(A_i\tilde K_i^\star)=a_i^- + a_i^+\delta_i^\star$.

Collecting blocks yields
\begin{align*}
\mathrm{tr}(A\tilde K^\star)
&=\sum_{i=1}^k \big(a_i^- + a_i^+\delta_i\big),\\
\log\det(\tilde K^\star)
&=\sum_{i=1}^k \log\delta_i.
\end{align*}
Since $\sum_i a_i^-$ is constant, the normalized design reduces to the scalar convex program
\begin{equation*}
\min_{0<\delta_i\le 1}\ \sum_{i=1}^k a_i^+\delta_i
\quad\text{s.t.}\quad
\sum_{i=1}^k \log\delta_i \ge -2R.
\end{equation*}
The constraint must be active: if $\sum_i\log\delta_i>-2R$, one can decrease some $\delta_i$
slightly (while keeping $\delta_i\le 1$) to strictly lower the objective.

The KKT conditions yield the waterfilling solution
\begin{equation*}
\delta_i^\star(R)=\min\left\{1,\ \frac{\eta^\star(R)}{a_i^+}\right\},
\end{equation*}
where $\eta^\star(R)>0$ is chosen so that the (active) constraint holds:
\begin{equation*}
\sum_{i=1}^k \log\delta_i^\star(R)=-2R.
\end{equation*}
Undoing the normalization yields the stated form \eqref{eq54}--\eqref{eq62}.
\end{proof}

%%%%%%%%%%%%%%%%%%%%%%%%%%%%%%%%%%%%%%%%%%%%%%%%%%%%
\section{Implications in Learning Systems}
\label{section5}
%%%%%%%%%%%%%%%%%%%%%%%%%%%%%%%%%%%%%%
\subsection{Multimodal Encoders and Elimination of the Remote Penalty}
\label{subsection5A}

We study a remote sensing regime where the encoder observes noisy modalities of the latent
state $X$ (e.g., vision/audio/text features) that are conditionally independent given $X$.

Let $X\sim\mathcal N(0,\Sigma_X)$ with $\Sigma_X\succ 0$, and suppose the encoder observes $m$ conditionally independent
modalities
\begin{equation}
Z_j = H_j X + W_j,
\label{eq63}
\end{equation}
for $j=1,\dots,m$, where $W_j\sim\mathcal N(0,R_j)$ are independent of each other and of $(X,V)$.
Let $Z^{(m)}:=(Z_1,\dots,Z_m)$, and define the MMSE proxy
\begin{equation}
S^{(m)} := \mathbb E[X\mid Z^{(m)}].
\label{eq64}
\end{equation}
Define
\begin{equation}
\Sigma_S^{(m)} := \mathrm{Cov}(S^{(m)}).
\label{eq65}
\end{equation}
Define the cumulative precision
\begin{equation}
J_m := \sum_{j=1}^m H_j^\top R_j^{-1}H_j \succeq 0.
\label{eqJm}
\end{equation}

Define the normalized recoverability matrix
\begin{equation}
\Gamma_m := \Sigma_X^{-1/2}\,\Sigma_S^{(m)}\,\Sigma_X^{-1/2}.
\label{eq66}
\end{equation}
Equivalently,
\begin{equation}
\Gamma_m = I - \Sigma_X^{-1/2}\Sigma_{X\mid Z^{(m)}}\Sigma_X^{-1/2}.
\label{eq67}
\end{equation}
Define the geometric-mean recoverability factor
\begin{equation}
G_m := \left(\frac{\det \Sigma_S^{(m)}}{\det \Sigma_X}\right)^{1/k}.
\label{eq68}
\end{equation}
Equivalently,
\begin{equation}
G_m = \left(\prod_{i=1}^k \gamma_i^{(m)}\right)^{1/k},
\label{eq69}
\end{equation}
where $\{\gamma_i^{(m)}\}$ are the eigenvalues of $\Gamma_m$.

\begin{theorem}
\label{thm5}
The posterior covariance and recoverable covariance satisfy
\begin{align}
\Sigma_{X\mid Z^{(m)}}^{-1}
&=
\Sigma_X^{-1}+J_m,
\label{eq71b}\\
\Sigma_S^{(m)}
&=
\Sigma_X-\Sigma_{X\mid Z^{(m)}}.
\label{eq70b}
\end{align}
Moreover, $\Sigma_{X\mid Z^{(m)}}$ is monotone decreasing in $m$ in Loewner order,
and $\Sigma_S^{(m)}$ is monotone increasing.

If $J_m\succ 0$ (equivalently $\mathrm{Null}(J_m)=\{0\}$), then
\begin{equation}
0 \prec \Sigma_{X\mid Z^{(m)}} \prec \Sigma_X
\quad\text{and}\quad
0\prec \Sigma_S^{(m)} \prec \Sigma_X,
\label{eqSpanFinite}
\end{equation}
hence $G_m\in(0,1)$.

If in addition $\lambda_{\min}(J_m)\to\infty$ as $m\to\infty$, then
\begin{equation}
\Sigma_{X\mid Z^{(m)}} \to 0,
\qquad
\Sigma_S^{(m)}\to \Sigma_X,
\qquad
G_m\to 1.
\label{eqSpanAsymp}
\end{equation}
\end{theorem}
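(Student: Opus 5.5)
The plan is to compute the posterior of $X$ given $Z^{(m)}$ directly. After that, every remaining claim is a Loewner-order manipulation of the identity \eqref{eq71b}.

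\emph{Step 1 (information form).} Stack the modalities as $Z^{(m)}=\mathcal H X+\mathcal W$, where $\mathcal H=[H_1^\top,\dots,H_m^\top]^\top$ and $\mathcal W\sim\mathcal N(0,\mathrm{blkdiag}(R_1,\dots,R_m))$ is independent of $X$. Then $(X,Z^{(m)})$ is jointly Gaussian. The standard Gaussian conditioning formula, rewritten via the matrix inversion lemma (or read off from the quadratic form of the joint density in $x$), gives
\[
\Sigma_{X\mid Z^{(m)}}^{-1}=\Sigma_X^{-1}+\mathcal H^\top\mathrm{blkdiag}(R_j)^{-1}\mathcal H=\Sigma_X^{-1}+J_m,
\]
which is \eqref{eq71b}. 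This is the same computation as in Lemma~\ref{lemma3}, with a general linear observation in place of $U=X+Z$. Next, write $X=S^{(m)}+e$ with $e\perp S^{(m)}$. By orthogonality, as in the proof of Lemma~\ref{lemma1}, $\Sigma_X=\Sigma_S^{(m)}+\Sigma_{X\mid Z^{(m)}}$, which is \eqref{eq70b}. The conditional covariance is constant in the Gaussian case, so the average MMSE covariance coincides with it.

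\emph{Step 2 (monotonicity).} We have $J_{m+1}=J_m+H_{m+1}^\top R_{m+1}^{-1}H_{m+1}\succeq J_m$, so the precision $\Sigma_X^{-1}+J_m$ is Loewner-increasing in $m$. Matrix inversion is order-reversing on positive definite matrices, so $\Sigma_{X\mid Z^{(m)}}$ is decreasing. By \eqref{eq70b}, $\Sigma_S^{(m)}$ is then increasing.

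\emph{Step 3 (strict bounds and $G_m$).} Since $\Sigma_X^{-1}+J_m\succ 0$, its inverse is $\succ 0$. If $J_m\succ0$, then $\Sigma_X^{-1}+J_m\succ\Sigma_X^{-1}$, and strict order reversal gives $\Sigma_{X\mid Z^{(m)}}\prec\Sigma_X$. The strict reversal is the one step needing a short argument. I would conjugate by $\Sigma_X^{1/2}$, which reduces it to showing that $I+\Sigma_X^{1/2}J_m\Sigma_X^{1/2}\succ I$ has inverse $\prec I$; this follows by looking at eigenvalues. Then \eqref{eq70b} gives $0\prec\Sigma_S^{(m)}\prec\Sigma_X$. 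In normalized form, $\Gamma_m=I-(I+\Sigma_X^{1/2}J_m\Sigma_X^{1/2})^{-1}$. Its eigenvalues are $\gamma_i=\mu_i/(1+\mu_i)\in(0,1)$, where $\mu_i>0$ are the eigenvalues of $\Sigma_X^{1/2}J_m\Sigma_X^{1/2}$. Hence $G_m\in(0,1)$ by \eqref{eq69}.

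\emph{Step 4 (asymptotics).} From $\Sigma_{X\mid Z^{(m)}}=(\Sigma_X^{-1}+J_m)^{-1}\preceq J_m^{-1}$ we get $\|\Sigma_{X\mid Z^{(m)}}\|\le1/\lambda_{\min}(J_m)\to0$. Then \eqref{eq70b} gives $\Sigma_S^{(m)}\to\Sigma_X$. For $G_m$, note $\mu_i\ge\lambda_{\min}(\Sigma_X)\lambda_{\min}(J_m)\to\infty$, so each $\gamma_i\to1$, and by continuity of the finite product in \eqref{eq69}, $G_m\to1$.

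Nothing here is deep. The only place needing care is the strict inequalities in Step 3, and the normalized eigenvalue formula for $\gamma_i$ handles them.
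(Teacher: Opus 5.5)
Your proof is correct and follows the paper's route: the information-form identity gives \eqref{eq71b}, orthogonality gives \eqref{eq70b}, order reversal of matrix inversion is applied to $J_{m+1}\succeq J_m$ and to $\Sigma_X^{-1}+J_m\succ\Sigma_X^{-1}$, and the divergence of $\lambda_{\min}(J_m)$ gives the limits. Your normalized eigenvalues $\gamma_i=\mu_i/(1+\mu_i)$ and the bound $\Sigma_{X\mid Z^{(m)}}\preceq J_m^{-1}$ just spell out the strict inequalities and the $G_m\to 1$ step, which the paper handles by citing order reversal and continuity of $\det$ on $\mathbb S_{++}^k$.
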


\begin{remark}
\label{remark:spanning}

The cumulative precision matrix $J_m$
plays a central role in linear Gaussian inference. In vector {remote} and {indirect} Gaussian source coding, it is standard to assume that
the measurement operator is full rank (or that the induced precision is nonsingular), since otherwise
some components of the source are never observed and the indirect coding penalty cannot vanish \cite{tian2009remote,rini2019compress}.
In linear filtering and sensor fusion \cite{bocquet2017degenerate, plarre2009kalman,sinopoli2004kalman}, $J_m$ coincides with the accumulated measurement precision
mapped into the state space, and conditions such as $J_m \succ 0$ or $\lambda_{\min}(J_m)\to\infty$
are the static counterparts of observability and persistent excitation assumptions that ensure
error covariance contraction.
Similarly, in Bayesian linear inverse problems, it is well known that posterior contraction occurs
only on the \emph{likelihood-informed subspace}, while directions orthogonal to the range of
$J_m$ retain their prior variance \cite{carere2024optimal}.

In the present semantic compression setting, the condition $J_m\succ 0$ ensures that no direction
of the latent state $X$ is permanently hidden from the encoder, while the stronger condition
$\lambda_{\min}(J_m)\to\infty$ guarantees that the posterior covariance $\Sigma_{X\mid Z^{(m)}}$
collapses to zero as the number of modalities grows.
Without these spanning conditions, there exists a nontrivial subspace along which the encoder
cannot improve its estimate regardless of rate or compute, and the semantic performance gap
relative to the direct-encoding benchmark cannot vanish.
\end{remark}

\begin{proof}
Equations \eqref{eq71b}--\eqref{eq70b} are standard linear-Gaussian MMSE identities.

Monotonicity follows since $J_{m+1}=J_m+H_{m+1}^\top R_{m+1}^{-1}H_{m+1}\succeq J_m$,
hence $\Sigma_{X\mid Z^{(m+1)}}^{-1}\succeq \Sigma_{X\mid Z^{(m)}}^{-1}$ and therefore
$\Sigma_{X\mid Z^{(m+1)}}\preceq \Sigma_{X\mid Z^{(m)}}$.

If $J_m\succ 0$, then $\Sigma_X^{-1}+J_m\succ \Sigma_X^{-1}$, so
$(\Sigma_X^{-1}+J_m)^{-1} \prec \Sigma_X$, yielding \eqref{eqSpanFinite}.

If $\lambda_{\min}(J_m)\to\infty$, then $\lambda_{\min}(\Sigma_X^{-1}+J_m)\to\infty$,
so $\|\Sigma_{X\mid Z^{(m)}}\|=\|(\Sigma_X^{-1}+J_m)^{-1}\|\to 0$, proving \eqref{eqSpanAsymp}.
The statements for $\Sigma_S^{(m)}$ and $G_m$ follow from \eqref{eq70b} and continuity of $\det(\cdot)$
on $\mathbb S_{++}^k$.
\end{proof}

%\begin{theorem}
%
%The recoverable covariance satisfies
%\begin{equation}
%\Sigma_S^{(m)}=\Sigma_X-\Sigma_{X\mid Z^{(m)}},
%\label{eq70}
%\end{equation}
%with
%\begin{equation}
%\Sigma_{X\mid Z^{(m)}}^{-1}
%=
%\Sigma_X^{-1} + \sum_{j=1}^m H_j^\top R_j^{-1} H_j.
%\label{eq71}
%\end{equation}
%Moreover, $\Sigma_S^{(m)}$ is monotone increasing in $m$ (wrt. Loewner order), hence
%$\Gamma_m$ and $G_m$ are nondecreasing. If the cumulative precision diverges in all
%directions, then $\Sigma_{X\mid Z^{(m)}}\to 0$, $\Sigma_S^{(m)}\to\Sigma_X$, and $G_m\to 1$.
%\end{theorem}
%
%\begin{proof}
%Equations~\eqref{eq70} and~\eqref{eq71} are standard for linear-Gaussian estimation.
%Monotonicity follows since adding a modality increases the precision matrix, which decreases
%$\Sigma_{X\mid Z^{(m)}}$ and thus increases $\Sigma_S^{(m)}=\Sigma_X-\Sigma_{X\mid Z^{(m)}}$.
%The remaining claims follow by congruence with $\Sigma_X^{-1/2}$ and the divergence condition.
%\end{proof}

\begin{corollary}
\label{cor2}
Let $D_e^{(m)}(R)$ denote the optimal semantic distortion at rate $R$ when the encoder
observes $Z^{(m)}=(Z_1,\dots,Z_m)$ with $Z_j=H_jX+W_j$ as in Theorem~\ref{thm5},
and the semantic variable is $\Theta=BX+V$ with $V\perp (X,Z^{(m)})$.
Let $D_e^{(\mathrm{dir})}(R)$ denote the corresponding distortion when the encoder observes $X$.

Then:
\begin{enumerate}
\item For every $m$,
\begin{equation}
D_e^{(m)}(R)
=
\mathrm{tr}(W_eC_0)+\mathrm{tr}\!\big(\tilde W\,\Sigma_{X\mid Z^{(m)}}\big)+\Delta_m(R),
\label{eq72b}
\end{equation}
where
\begin{equation}
\Delta_m(R)
:=
\min_{0\prec K \preceq \Sigma_S^{(m)}}
\left\{
\mathrm{tr}(\tilde W K):
\frac12\log\frac{\det\Sigma_S^{(m)}}{\det K}\le R
\right\}.
\label{eq73b}
\end{equation}

\item If $\lambda_{\min}(J_m)\to\infty$ as $m\to\infty$ (equivalently $\Sigma_{X\mid Z^{(m)}}\to 0$),
then for every fixed $R$,
\begin{equation}
\lim_{m\to\infty} D_e^{(m)}(R)=D_e^{(\mathrm{dir})}(R).
\label{eq76b}
\end{equation}

\item If instead $\mathrm{Null}(J_\infty)\neq\{0\}$ where $J_\infty:=\sum_{j\ge 1} H_j^\top R_j^{-1}H_j$
(convergent in the PSD sense), then $\Sigma_{X\mid Z^{(m)}}$ does not converge to $0$ and there remains
a nonzero gap to the direct benchmark for any fixed rate $R$.
\end{enumerate}
\end{corollary}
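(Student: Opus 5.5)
Part~1 follows the remote argument of Theorem~\ref{thm2}, with the MMSE proxy $S:=S^{(m)}=\mathbb E[X\mid Z^{(m)}]$ playing the role of $\Theta$.

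Since $M$ is a function of $Z^{(m),n}$ and everything is jointly Gaussian, the tower property gives $\hat X=\mathbb E[X\mid M]=\mathbb E[S\mid M]$. Hence the decoder's error splits as
\[
e=X-\hat X=(X-S)+(S-\hat S),
\]
where the first term is the estimation innovation. It has covariance $\Sigma_{X\mid Z^{(m)}}$ and is independent of $Z^{(m)}$, hence of $(S,M)$. The second term has covariance $K:=\Sigma_{S\mid M}$. The two terms are uncorrelated, so $K_X=\Sigma_{X\mid Z^{(m)}}+K$. Because $V\perp(X,Z^{(m)})$, hence $V\perp(X,M)$, Lemma~\ref{lemma1} applies verbatim. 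It gives
\[
D_e=\mathrm{tr}(W_eC_0)+\mathrm{tr}(\tilde W\Sigma_{X\mid Z^{(m)}})+\mathrm{tr}(\tilde WK).
\]

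For the converse, note that $I(Z^{(m)};U)\ge I(S;U)$ because $S$ is a function of $Z^{(m)}$. Single-letterizing as in Theorem~\ref{thm3} and applying Lemma~\ref{lemma2} to the Gaussian vector $S$ yields $R\ge\frac12\log(\det\Sigma_S^{(m)}/\det K)$, with $0\preceq K\preceq\Sigma_S^{(m)}$. For achievability, the encoder computes $S$ and uses the test channel $U=S+Z$ of Lemma~\ref{lemma3} with $\Sigma_S^{(m)}$ in place of $\Sigma_X$. This requires $\Sigma_S^{(m)}\succ0$, which holds by Theorem~\ref{thm5} when $J_m\succ0$. Otherwise I restrict to the range of $\Sigma_S^{(m)}$ and interpret $\det$ there. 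I will flag this degenerate-rank issue rather than resolve it fully. Together these give \eqref{eq72b}--\eqref{eq73b}.

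For part~2, first write $D_e^{(\mathrm{dir})}(R)=\mathrm{tr}(W_eC_0)+\Delta_{\mathrm{dir}}(R)$, where $\Delta_{\mathrm{dir}}(R)$ is the program \eqref{eq73b} with $\Sigma_X$ in place of $\Sigma_S^{(m)}$; this is Theorem~\ref{thm1} read at fixed rate. By Theorem~\ref{thm5}, $\Sigma_{X\mid Z^{(m)}}\to0$ and $\Sigma_S^{(m)}\to\Sigma_X$, so the middle term of \eqref{eq72b} vanishes. It remains to show $\Delta_m(R)\to\Delta_{\mathrm{dir}}(R)$. I plan to normalize $K=(\Sigma_S^{(m)})^{1/2}\tilde D(\Sigma_S^{(m)})^{1/2}$ with $0\prec \tilde D\preceq I$ and $\log\det \tilde D\ge -2R$. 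This makes the feasible set a fixed compact set independent of $m$. The objective $\mathrm{tr}\big((\Sigma_S^{(m)})^{1/2}\tilde W(\Sigma_S^{(m)})^{1/2}\tilde D\big)$ then converges uniformly in $\tilde D$ to the direct one, because $\|\tilde D\|\le1$ and the matrix square root is continuous. Uniform convergence of the objective over a fixed compact set gives convergence of the minima.

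This continuity step is the main technical obstacle, mainly in checking that the reparametrization is legitimate when $\Sigma_S^{(m)}$ is only eventually nonsingular. That is fine for $m$ large, since $\Sigma_S^{(m)}\to\Sigma_X\succ0$.

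For part~3, take a unit vector $u\in\mathrm{Null}(J_\infty)$. Since $J_m\preceq J_\infty$, we have $u^\top J_m u=0$ for every $m$. Hence $\Sigma_{X\mid Z^{(m)}}^{-1}\preceq \Sigma_X^{-1}+J_\infty$ for every $m$, and so
\[
\Sigma_{X\mid Z^{(m)}}\succeq(\Sigma_X^{-1}+J_\infty)^{-1}\neq0 .
\]
The posterior covariance therefore stays bounded away from zero along $u$. To turn this into a strict gap I would compare $K_X=\Sigma_{X\mid Z^{(m)}}+K$ against the direct program. In the direct problem the corresponding direction can be compressed at positive rate, while under remote observation it cannot. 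A strict gap needs a positivity assumption such as $\tilde W\succ0$ (cf.\ Remark~\ref{remark4}), which I expect must be added. Under it, I would argue that the unreachable component forces a strictly larger optimal value than the direct solution of Corollary~\ref{cor4}.
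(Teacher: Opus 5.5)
Your parts~1 and~2 follow the paper's argument: the split $X=S^{(m)}+N^{(m)}$, the tower property and total covariance giving $K_X=\Sigma_{X\mid Z^{(m)}}+\Sigma_{S\mid M}$, Lemma~\ref{lemma1} via $V\perp M$, Lemma~\ref{lemma2} applied to $S^{(m)}$, and then continuity. Your normalization to a fixed compact set of $\tilde D$'s is a real sharpening of the paper's one-line continuity claim. Your rank flag is also warranted. Since $\Sigma_S^{(m)}=(\Sigma_X^{-1}+J_m)^{-1}J_m\Sigma_X$ has rank $\mathrm{rank}(J_m)$, in the regime of part~3 the program \eqref{eq73b} with $0\prec K\preceq\Sigma_S^{(m)}$ is empty and must be read on the range.

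\textbf{The gap is in part~3.} The mechanism you sketch, which is also the paper's, is that the unobserved direction $u$ forces a strictly larger value than the direct optimizer. This fails because the direct waterfilling solution may itself leave $u$ uncompressed. Take $k=2$, $\Sigma_X=I$, $B=I$, and $W_e=\mathrm{diag}(w_1,w_2)$ with $w_1>w_2>0$, so $\tilde W=W_e\succ 0$. Let the modalities see only $x_1$, so $J_m=\mathrm{diag}(j_m,0)$ and $u=e_2$.
\begin{itemize}
\item For $0<R\le\frac12\log(w_1/w_2)$ the direct optimizer is $K_X^\star=\mathrm{diag}(e^{-2R},1)$, so the $e_2$-terms of the two costs coincide.
\item The remote optimum is $K_X=\mathrm{diag}\bigl((1+j_me^{-2R})/(1+j_m),\,1\bigr)$, so the gap $w_1(1-e^{-2R})/(1+j_m)$ lives entirely in the observed direction $e_1$.
\item If $j_m\to\infty$ this gap vanishes, although $e_2\in\mathrm{Null}(J_m)$ for all $m$.
\end{itemize}
Along $u$ the remote posterior is bounded below by $1/(u^\top\Sigma_X^{-1}u)$ whether or not $J_\infty$ is finite. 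So any argument that uses only the $u$-component, even with $\tilde W\succ 0$, would also apply when $j_m\to\infty$, where the conclusion is false.

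What actually produces the gap is finiteness of $J_\infty$, through the bound you already derived: $\Sigma_{X\mid Z^{(m)}}\succeq P_\infty:=(\Sigma_X^{-1}+J_\infty)^{-1}\succ0$. This uses only $J_m\preceq J_\infty$, not $u$. The argument then runs as follows.
\begin{itemize}
\item An encoder seeing all modalities can mimic any $m$, so $D_e^{(m)}(R)\ge D_e^{(\infty)}(R)$.
\item For its optimal Gaussian test channel $U=S^{(\infty)}+Z$, the chain $X-S^{(\infty)}-U$ gives $I(X;U)=I(S^{(\infty)};U)-I(S^{(\infty)};U\mid X)$.
\item This is strictly below $R$ when $R>0$. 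It is trivial if $U$ is uninformative. Otherwise $P_\infty\succ0$ makes $\mathrm{Cov}(S^{(\infty)}\mid X)$ nonsingular on the range of $\Sigma_S^{(\infty)}$, so the subtracted term is positive.
\item Hence the induced $K_X=P_\infty+K^\star$ strictly satisfies the rate constraint of the fixed-rate direct program (\eqref{eq84} with $R$ in place of $R_{\mathrm{tot}}$).
\item With $\tilde W\succ0$, the matrix $(1-\epsilon)K_X$ stays feasible and is strictly cheaper. This gives $D_e^{(\mathrm{dir})}(R)<D_e^{(\infty)}(R)\le D_e^{(m)}(R)$ for every $m$.
\end{itemize}
The hypotheses actually needed are $J_\infty$ finite, $\tilde W\succ0$, and $R>0$. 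At $R=0$ both sides equal $\mathrm{tr}(W_eC_0)+\mathrm{tr}(\tilde W\Sigma_X)$. The null-space condition plays no role.
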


\begin{proof}
Since $V\perp (X,Z^{(m)})$ and $M=f(Z^{(m)})$, we have $V\perp M$, hence
$D_e=\mathrm{tr}(W_eC_0)+\mathrm{tr}(\tilde W K_X)$ with $K_X=\Sigma_{X\mid M}$.

For Gaussian $Z^{(m)}$, write the MMSE decomposition $X=S^{(m)}+N^{(m)}$ with
$S^{(m)}=\mathbb E[X\mid Z^{(m)}]$ and $N^{(m)}\perp Z^{(m)}$. For any $M=f(Z^{(m)})$,
$\hat X=\mathbb E[X\mid M]=\mathbb E[S^{(m)}\mid M]$, hence by the law of total covariance,
\[
\Sigma_{X\mid M}=\Sigma_{X\mid Z^{(m)}}+\Sigma_{S^{(m)}\mid M}.
\]
Let $K:=\Sigma_{S^{(m)}\mid M}$. The rate constraint implies $R\ge I(S^{(m)};M)$, and
Lemma~\ref{lemma2} gives $I(S^{(m)};M)\ge \frac12\log\frac{\det\Sigma_S^{(m)}}{\det K}$,
yielding \eqref{eq73b}. Substituting into $D_e$ yields \eqref{eq72b}.

If $\lambda_{\min}(J_m)\to\infty$, then by Theorem~\ref{thm5} we have
$\Sigma_{X\mid Z^{(m)}}\to 0$ and $\Sigma_S^{(m)}\to \Sigma_X$.
The optimization defining $\Delta_m(R)$ is continuous under this limit (the feasible sets
converge and the objective is linear), hence $\Delta_m(R)\to \Delta_{\mathrm{dir}}(R)$ and
\eqref{eq76b} follows.
If $\mathrm{Null}(J_\infty)\neq\{0\}$, then $\Sigma_{X\mid Z^{(m)}}$ retains a nonzero component
on the unobserved subspace, so the gap to the direct benchmark cannot vanish.
\end{proof}

\begin{remark}
Theorem~\ref{thm2} addresses the case where the encoder observes the semantic proxy
$\Theta=BX+V$, so $M$ can depend on $V$ and the encoder's objective involves an additional
$X$--$V$ cross term. In contrast, the multimodal model in \eqref{eq63} depends only on $X$
(plus sensor noise), so $V\perp M$ and the problem reduces to indirect Gaussian coding of $X$.
Accordingly, multimodality closes the gap to the direct-$X$ benchmark but cannot recover
independent semantic randomness $V$ that is never observed.
\end{remark}

%%%%%%%%%%%%%%%%%%%%%%%%%%%%%%%%%%%%%%%%%%%%
\subsection{Compute as an Information Budget and Scaling of Semantic Precision}
\label{subsection5B}

This subsection connects architectural and inference-time compute to an information budget
governing how sharply a model can refine its posterior over a latent state. The guiding idea
is that each layer (or inference step) can extract only a limited amount of new information
from the input (or memory) before passing a refined belief onward.

Let $X\sim\mathcal N(0,\Sigma_X)$ be the latent state and let $Z$ denote the available data
correlated with $X$. A representation learner produces a sequence of internal states
$H^{(0)},H^{(1)},\dots,H^{(L)}$ via a (possibly randomized) update rule
\begin{equation}
H^{(\ell+1)} \sim p_\ell(\cdot\mid H^{(\ell)},Z),
\label{eq77}
\end{equation}
for $\ell=0,\dots,L-1$, so that the model can re-access $Z$ at each step.

We interpret limited compute at step $\ell$ as a bound on the incremental information extracted from $Z$:
\begin{equation}
I\!\big(Z;H^{(\ell+1)}\mid H^{(\ell)}\big)\le R_\ell,
\label{eq78}
\end{equation}
for $\ell=0,\dots,L-1$.

Since $H^{(\ell+1)}$ is generated from $(H^{(\ell)},Z)$, we have the Markov property
$X-(H^{(\ell)},Z)-H^{(\ell+1)}$, hence by data processing
\begin{equation}
I\!\big(X;H^{(\ell+1)}\mid H^{(\ell)}\big)\le R_\ell,
\label{eq79}
\end{equation}
for $\ell=0,\dots,L-1$.

Define the posterior error covariance after step $\ell$ by
\begin{equation}
K_X^{(\ell)}:=\Sigma_{X\mid H^{(\ell)}}.
\label{eq80}
\end{equation}
We take
\begin{equation}
K_X^{(0)}=\Sigma_X.
\label{eq81}
\end{equation}

\begin{theorem}
\label{thm6}
Let $X\sim\mathcal N(0,\Sigma_X)$ and let $H^{(0)},\dots,H^{(L)}$ be any sequence of
representations generated from $(H^{(\ell)},Z)$. Assume that for each $\ell$,
$X-(H^{(\ell)},Z)-H^{(\ell+1)}$ and \eqref{eq78} holds. Then
\begin{equation}
\log\det K_X^{(L)} \ge \log\det\Sigma_X - 2\sum_{\ell=0}^{L-1}R_\ell.
\label{eq82}
\end{equation}
\end{theorem}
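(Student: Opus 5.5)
The plan is to bound the total information that the final representation $H^{(L)}$ carries about $X$ by telescoping the per-step budgets \eqref{eq79}. Then Lemma~\ref{lemma2} converts that mutual-information bound into the log-det bound \eqref{eq82}. The argument never needs Gaussianity of the representations or optimality of any step; it uses only the chain rule, the per-step budgets, and the Gaussian maximum-entropy bound on $h(X\mid H^{(L)})$.

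\emph{Step 1 (one-step increment).} For each $\ell$, I will use that adding a variable cannot decrease mutual information, followed by the chain rule:
\begin{equation*}
I\big(X;H^{(\ell+1)}\big)\le I\big(X;H^{(\ell)},H^{(\ell+1)}\big)
= I\big(X;H^{(\ell)}\big)+I\big(X;H^{(\ell+1)}\mid H^{(\ell)}\big).
\end{equation*}
The second term is at most $R_\ell$ by \eqref{eq79}. That bound is itself the data-processing consequence of the Markov chain $X-(H^{(\ell)},Z)-H^{(\ell+1)}$ combined with \eqref{eq78}.

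The main subtlety is which conditioning to use. One could expand $I(X;H^{(0)},\dots,H^{(L)})$ by the chain rule over the full history, but the budget \eqref{eq79} conditions only on $H^{(\ell)}$, not on $H^{(0)},\dots,H^{(\ell)}$, and the two conditional informations are not comparable in general. The one-step expansion above sidesteps this, because it only ever conditions on the immediately preceding state.

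\emph{Step 2 (telescoping).} Iterating Step 1 from $\ell=0$ to $L-1$ gives
\begin{equation*}
I\big(X;H^{(L)}\big)\le I\big(X;H^{(0)}\big)+\sum_{\ell=0}^{L-1}R_\ell .
\end{equation*}
I will interpret the normalization \eqref{eq81} as saying that the initial state carries no information about $X$, i.e., $H^{(0)}$ is independent of $X$ (for instance, a fixed initialization). This gives $I(X;H^{(0)})=0$. This assumption should be stated explicitly, since $K_X^{(0)}=\Sigma_X$ alone only says that $H^{(0)}$ carries no \emph{linear} information about $X$.

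\emph{Step 3 (convert to log-det).} I will apply Lemma~\ref{lemma2} with $U=H^{(L)}$, which gives
\begin{equation*}
\tfrac12\log\frac{\det\Sigma_X}{\det K_X^{(L)}}\le I\big(X;H^{(L)}\big).
\end{equation*}
Combining this with Step 2 and rearranging yields \eqref{eq82}.

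If $\det K_X^{(L)}=0$, the left-hand side is $+\infty$ while the right-hand side is finite whenever the budgets $R_\ell$ are finite, which is a contradiction. So $\det K_X^{(L)}>0$ and \eqref{eq82} holds as a finite inequality.
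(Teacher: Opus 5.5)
Your proof is correct and follows the paper's argument: telescope the per-step budgets to get $I(X;H^{(L)})\le I(X;H^{(0)})+\sum_{\ell=0}^{L-1}R_\ell$, then convert to the log-det bound with Lemma~\ref{lemma2}. Your one-step derivation makes precise the inequality the paper simply calls ``the chain rule,'' and you explicitly assume $H^{(0)}\perp X$ so that $I(X;H^{(0)})=0$, which the paper drops silently. That assumption is genuinely needed, because $K_X^{(0)}=\Sigma_X$ only forces $\mathbb{E}[X\mid H^{(0)}]=0$. For example, $H^{(0)}=|X|$ followed by one step revealing $\mathrm{sign}(X)$ at rate $\log 2$ gives $K_X^{(1)}=0$, violating \eqref{eq82}.
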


\begin{proof}
By the chain rule,
\begin{align*}
I(X;H^{(L)})
&\leq 
I(X;H^{(0)}) + \sum_{\ell=0}^{L-1} I\!\big(X;H^{(\ell+1)}\mid H^{(\ell)}\big).
\end{align*}
By data processing under $X-(H^{(\ell)},Z)-H^{(\ell+1)}$ and the incremental budget \eqref{eq78},
we have
\begin{align*}
I\!\big(X;H^{(\ell+1)}\mid H^{(\ell)}\big)
&\le
I\!\big(Z;H^{(\ell+1)}\mid H^{(\ell)}\big)\\
&\le R_\ell.
\end{align*}
Therefore,
\begin{equation*}
I(X;H^{(L)}) \le I(X;H^{(0)}) + \sum_{\ell=0}^{L-1} R_\ell.
\end{equation*}
Since $X$ is Gaussian, Lemma~\ref{lemma2} implies
\begin{equation*}
I(X;H^{(L)}) \ge \frac12\log\frac{\det\Sigma_X}{\det K_X^{(L)}}.
\end{equation*}
Combining yields \eqref{eq82}.
\end{proof}

\begin{remark}
Constraint \eqref{eq78} should be read as a modeling abstraction: compute does not create
new information about $X$ beyond what is already contained in $Z$, but limited compute and
architecture can prevent the model from extracting and representing that information.
Allowing each step to depend on $(H^{(\ell)},Z)$ captures the empirically important fact
that modern architectures repeatedly re-attend to the input and context.
\end{remark}

Define the total compute budget
\begin{equation}
R_{\mathrm{tot}} := \sum_{\ell=0}^{L-1} R_\ell.
\label{eq83}
\end{equation}
Theorem~\ref{thm6} implies that $R_{\mathrm{tot}}$ acts like an information-rate budget in the
direct strategic RD problem, since any final posterior $K_X^{(L)}$ must satisfy
$\frac12\log\frac{\det\Sigma_X}{\det K_X^{(L)}} \le R_{\mathrm{tot}}$.

Accordingly, define
\begin{equation}
D_{\mathrm{opt}}(R_{\mathrm{tot}})
=
\min_{0\prec K\preceq \Sigma_X}
\left\{
\mathrm{tr}(\tilde W K):
\frac12\log\frac{\det\Sigma_X}{\det K}\le R_{\mathrm{tot}}
\right\}.
\label{eq84}
\end{equation}

\begin{theorem}
\label{thm7}
Let $R_{\mathrm{tot}}$ be defined by \eqref{eq83}. Then any mechanism satisfying \eqref{eq79}
must obey
\begin{equation}
\mathrm{tr}(\tilde W K_X^{(L)})\ge D_{\mathrm{opt}}(R_{\mathrm{tot}}),
\label{eq85}
\end{equation}
and this lower bound is achievable by a Gaussian test-channel.

Moreover, if the optimizer satisfies the interior condition $K^\star\prec \Sigma_X$,
then the unique optimum is
\begin{equation}
K^\star(R_{\mathrm{tot}})=\nu^\star(R_{\mathrm{tot}})\,\tilde W^{-1},
\label{eq86}
\end{equation}
with
\begin{equation}
\nu^\star(R_{\mathrm{tot}})
=
\exp\!\left(\frac{1}{k}\big[\log\det\Sigma_X+\log\det\tilde W-2R_{\mathrm{tot}}\big]\right),
\label{eq87}
\end{equation}
and the optimal semantic distortion decays exponentially:
\begin{equation}
D_{\mathrm{opt}}(R_{\mathrm{tot}})
=
k\,(\det\Sigma_X\,\det\tilde W)^{1/k}\,
\exp\!\left(-2\frac{R_{\mathrm{tot}}}{k}\right).
\label{eq88}
\end{equation}
\end{theorem}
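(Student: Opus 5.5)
The argument has three parts: a converse lower bound, achievability by a Gaussian test channel, and a closed-form solution of the convex program \eqref{eq84} in the interior regime.

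\emph{Lower bound.} Theorem~\ref{thm6} gives $\frac12\log\frac{\det\Sigma_X}{\det K_X^{(L)}}\le R_{\mathrm{tot}}$. The argument there only uses \eqref{eq79} together with $I(X;H^{(0)})=0$, which is implicit in \eqref{eq81}. The law of total covariance gives $0\preceq K_X^{(L)}\preceq\Sigma_X$. So $K_X^{(L)}$ is feasible for \eqref{eq84}, and \eqref{eq85} follows by definition of the minimum. If $K_X^{(L)}$ is singular, then $\log\det K_X^{(L)}=-\infty$ and the bound is violated, so $K_X^{(L)}\succ 0$ automatically.

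\emph{Achievability.} Take any feasible $K$ with $0\prec K\preceq\Sigma_X$. Lemma~\ref{lemma3} gives a channel $U=X+Z$ with $\Sigma_{X\mid U}=K$ and $I(X;U)=\frac12\log\frac{\det\Sigma_X}{\det K}\le R_{\mathrm{tot}}$. To embed this in the layered mechanism, I would split the Gaussian observation into $L$ conditionally independent pieces $U_\ell=X+Z_\ell$ with precisions $P_\ell$ summing to $K^{-1}-\Sigma_X^{-1}$. Each $P_\ell$ is chosen along a path from $\Sigma_X$ to $K$ so that the $\ell$-th log-det increment equals $R_\ell$; for instance, $K_\ell^{-1}=\Sigma_X^{-1}+t_\ell(K^{-1}-\Sigma_X^{-1})$ with $t_\ell$ tuned by continuity of the log-det. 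Set $H^{(\ell+1)}=(H^{(\ell)},U_\ell)$. Take $Z=X$ here, since the bound only concerns what is extractable from $Z$. Then the incremental informations are the successive Gaussian log-det drops, so \eqref{eq78} holds. This step is the main subtlety, because the theorem's ``achievable'' needs each per-step budget matched and not just the total. If the interpolation is awkward, I would fall back on the loose reading of achievability: a single Gaussian test channel attaining $D_{\mathrm{opt}}$ with total rate $R_{\mathrm{tot}}$.

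\emph{Closed form.} For the interior case, assume $\tilde W\succ 0$; this is needed for \eqref{eq86} to make sense, and by Remark~\ref{remark4} the constraint is then active. The program minimizes a linear objective over the convex set $\{\log\det K\ge c\}$, where $c=\log\det\Sigma_X-2R_{\mathrm{tot}}$.
\begin{itemize}
\item With $K^\star\prec\Sigma_X$, the constraint $K\preceq\Sigma_X$ is inactive, so the Lagrangian $\mathrm{tr}(\tilde W K)-\mu(\log\det K-c)$ has stationarity condition $\tilde W=\mu K^{-1}$, i.e.\ $K^\star=\mu\tilde W^{-1}$ with $\nu^\star=\mu$.
\item Uniqueness follows from strict concavity of $\log\det$: the active constraint set is strictly convex, and a linear functional has at most one minimizer on it.
\item Imposing $\log\det(\nu\tilde W^{-1})=c$ gives $k\log\nu-\log\det\tilde W=\log\det\Sigma_X-2R_{\mathrm{tot}}$, which is \eqref{eq87}.
\item Then $\mathrm{tr}(\tilde W K^\star)=k\nu^\star$, which expands to \eqref{eq88}; this matches \eqref{eq107} of Corollary~\ref{cor4} with all eigenvalues positive.
\end{itemize}
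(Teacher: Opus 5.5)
Your proposal is correct. On the converse (feasibility of $K_X^{(L)}$ via Theorem~\ref{thm6} plus the law of total covariance) and on the interior closed form (stationarity $\tilde W=\mu K^{-1}$, the active log-det constraint giving \eqref{eq87}, and $\mathrm{tr}(\tilde W K^\star)=k\nu^\star$ giving \eqref{eq88}), it is exactly the paper's argument.

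The genuine difference is achievability. The paper invokes a single Lemma~\ref{lemma3} channel $U=X+Z$ with $I(X;U)\le R_{\mathrm{tot}}$, which is precisely your fallback reading. That shows $D_{\mathrm{opt}}(R_{\mathrm{tot}})$ is attained by some Gaussian posterior of total information $R_{\mathrm{tot}}$. It never exhibits a layered mechanism that obeys the individual budgets $R_\ell$.

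Your interpolation $K_\ell^{-1}=\Sigma_X^{-1}+t_\ell(K^{-1}-\Sigma_X^{-1})$, with conditionally independent Gaussian looks, does exactly that. The map $t\mapsto\frac12\log\det(\Sigma_X^{-1}+t(K^{-1}-\Sigma_X^{-1}))$ is continuous and nondecreasing. So the intermediate value theorem gives nondecreasing $t_\ell$ whose increments are at most $R_\ell$, and equal to $R_\ell$ when the rate constraint is active. This shows the bound is tight for the compute model itself, not just for an abstract one-shot channel. The price is modest:
\begin{itemize}
\item you must take the data to be $Z=X$, so that \eqref{eq78} and \eqref{eq79} coincide;
\item if $K^{-1}-\Sigma_X^{-1}$ is singular, write the looks as $P_\ell^{1/2}X+N_\ell$ with standard Gaussian $N_\ell$, rather than $X+Z_\ell$.
\end{itemize}

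\textbf{Three small tightenings.}
\begin{itemize}
\item \eqref{eq81} only says $\mathbb E[X\mid H^{(0)}]=0$, which does not imply $I(X;H^{(0)})=0$. Independence of $H^{(0)}$ from $X$ is an extra, natural assumption. The paper's proof of Theorem~\ref{thm6} also makes it tacitly.
\item For uniqueness, the full feasible set is not strictly convex, because $K\preceq\Sigma_X$ has flat faces. Your strict convexity applies to $\{\log\det K\ge c\}$ alone. Add the line that $K^\star\prec\Sigma_X$ makes $K^\star$ a local, hence global, minimizer of the relaxed problem without that constraint, whose minimizer is unique. Alternatively, use AM--GM directly: $\mathrm{tr}(\tilde W K)\ge k(\det\tilde W\det K)^{1/k}$, with equality iff $K\propto\tilde W^{-1}$.
\item Activity of the rate constraint follows because $(1-\epsilon)K^\star$ would be feasible and strictly cheaper when $\tilde W\succ 0$. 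Remark~\ref{remark4} is not the reason.
\end{itemize}
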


\begin{proof}
The feasibility implication $\frac12\log\frac{\det\Sigma_X}{\det K_X^{(L)}}\le R_{\mathrm{tot}}$
follows directly from Theorem~\ref{thm6}. Since $D_{\mathrm{opt}}(R_{\mathrm{tot}})$ is the minimum
of $\mathrm{tr}(\tilde W K)$ over this feasible set, \eqref{eq85} follows.

Achievability follows from the same Gaussian test-channel construction used in the direct RD
proof: for any feasible $K$ one can construct a Gaussian auxiliary $U=X+Z$ inducing
$\Sigma_{X\mid U}=K$ and $I(X;U)=\frac12\log\frac{\det\Sigma_X}{\det K}\le R_{\mathrm{tot}}$.

For the interior solution, form the Lagrangian
\begin{equation*}
\mathcal L(K,\nu)=\mathrm{tr}(\tilde W K)+\nu\!\left(\frac12\log\frac{\det\Sigma_X}{\det K}-R_{\mathrm{tot}}\right),
\end{equation*}
with $\nu\ge 0$. Stationarity gives $\tilde W - \frac{\nu}{2}K^{-1}=0$, hence
$K=\nu\,\tilde W^{-1}$ after re-scaling the multiplier. Enforcing the active information
constraint yields \eqref{eq87}, and substituting into $\mathrm{tr}(\tilde W K^\star)$ yields
\eqref{eq88}.
\end{proof}
\begin{remark}
\label{remark6}
The interior form \eqref{eq86}--\eqref{eq88} requires $\tilde W\succ 0$ and that the upper bound
$K\preceq\Sigma_X$ is inactive. A convenient sufficient and necessary check can be written using
\begin{equation}
A_X:=\Sigma_X^{1/2}\tilde W\,\Sigma_X^{1/2}.
\label{eq110}
\end{equation}
When \eqref{eq86} holds, the multiplier $\nu^\star(R_{\mathrm{tot}})$ from \eqref{eq87}
must satisfy
\begin{equation}
\nu^\star(R_{\mathrm{tot}}) < \lambda_{\min}(A_X),
\label{eq111}
\end{equation}
which is equivalent to $K^\star(R_{\mathrm{tot}})\prec\Sigma_X$.

If \eqref{eq111} fails, then some directions saturate at the prior, and the optimizer takes the
reverse-waterfilling form in the eigenbasis of $A_X$:
there exists $U$ with $A_X=U\,\mathrm{diag}(\lambda_i)\,U^\top$ such that
\begin{equation}
K^\star(R_{\mathrm{tot}})
=
\Sigma_X^{1/2}U\,\mathrm{diag}(d_i^\star)\,U^\top\Sigma_X^{1/2},
\label{eq112}
\end{equation}
where for $\lambda_i>0$
\begin{equation}
d_i^\star=\min\left\{1,\ \frac{\nu^\star}{\lambda_i}\right\},
\label{eq113}
\end{equation}
and $d_i^\star=1$ whenever $\lambda_i\le 0$.
The scalar $\nu^\star$ is chosen so that the rate constraint is active:
\begin{equation}
\sum_{\lambda_i>0}\log\left(\frac{1}{d_i^\star}\right)=2R_{\mathrm{tot}}.
\label{eq114}
\end{equation}
In this regime the exponential scaling \eqref{eq88} becomes piecewise (each time a mode
desaturates, the effective slope changes).
\end{remark}

\begin{corollary}
\label{cor3}
If $R_\ell=R_0$ for all $\ell$, then in the interior regime,
\begin{equation}
D_{\mathrm{opt}}(L)=k(\det\Sigma_X\,\det\tilde W)^{1/k}\exp\!\left(-2L\frac{R_0}{k}\right).
\label{eq89}
\end{equation}
If an inference procedure produces intermediate states $M_1,\dots,M_T$ satisfying an incremental
budget $I(X;M_t\mid M_{1:t-1})\le R_{\mathrm{step}}$, then with $R_{\mathrm{tot}}=TR_{\mathrm{step}}$
the same exponential law \eqref{eq88} holds with $L$ replaced by $T$.
\end{corollary}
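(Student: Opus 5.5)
The corollary follows by specializing Theorem~\ref{thm6} and Theorem~\ref{thm7}. First I would treat the depth statement. If $R_\ell=R_0$ for every $\ell=0,\dots,L-1$, then the definition \eqref{eq83} gives $R_{\mathrm{tot}}=\sum_{\ell=0}^{L-1}R_0=LR_0$. Theorem~\ref{thm7} applies to any mechanism obeying \eqref{eq79}, and \eqref{eq79} was derived from \eqref{eq78} by data processing. So in the interior regime, that is when $\tilde W\succ 0$ and \eqref{eq111} holds with $R_{\mathrm{tot}}=LR_0$, substituting $R_{\mathrm{tot}}=LR_0$ into \eqref{eq88} yields \eqref{eq89} directly. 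I would note that the interior condition has to be checked at the specific value $LR_0$. Because $\nu^\star$ in \eqref{eq87} decreases in $R_{\mathrm{tot}}$, once $L$ is large enough that \eqref{eq111} holds, it keeps holding for every larger $L$. So "interior regime" is a well-defined condition on $L$.

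Next comes the inference-time statement. Here the states $M_1,\dots,M_T$ are not assumed to come from the layered recursion \eqref{eq77}, so I would not invoke Theorem~\ref{thm6} literally. Instead I would redo its chain-rule step. By the chain rule for mutual information,
\begin{equation*}
I(X;M_{1:T})=\sum_{t=1}^T I(X;M_t\mid M_{1:t-1})\le TR_{\mathrm{step}},
\end{equation*}
with the convention $M_{1:0}=\emptyset$. This is an equality, with no data-processing step needed, since the incremental budget is stated directly in terms of $X$. Applying Lemma~\ref{lemma2} with $U=M_{1:T}$ gives $\frac12\log\frac{\det\Sigma_X}{\det\Sigma_{X\mid M_{1:T}}}\le TR_{\mathrm{step}}$. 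So the final posterior covariance is feasible for \eqref{eq84} with $R_{\mathrm{tot}}=TR_{\mathrm{step}}$. The lower bound \eqref{eq85} then holds with $K_X^{(L)}$ replaced by $\Sigma_{X\mid M_{1:T}}$.

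Achievability is unchanged from Theorem~\ref{thm7}, because the Gaussian test channel depends only on the total budget. For example, one can split $Z$ into $T$ independent Gaussian increments whose precisions add up to $K^{\star-1}-\Sigma_X^{-1}$. The closed form \eqref{eq88} with $R_{\mathrm{tot}}=TR_{\mathrm{step}}$ then gives the same law as \eqref{eq89} with $L$ replaced by $T$.

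The main obstacle is bookkeeping rather than substance. One issue is that the posterior covariance the decoder uses, $\Sigma_{X\mid M_{1:T}}$ (or $\Sigma_{X\mid H^{(L)}}$), must be the object that Lemma~\ref{lemma2} controls. This holds because it is the average MMSE error covariance as defined in \eqref{eq2}. The other issue is stating the interior condition correctly, as a hypothesis on $TR_{\mathrm{step}}$ (respectively $LR_0$). I would phrase both explicitly and otherwise defer to Remark~\ref{remark6} for the piecewise-exponential behavior outside the interior regime.
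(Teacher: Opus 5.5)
Your proposal is correct and follows the paper's proof essentially verbatim. You substitute $R_{\mathrm{tot}}=LR_0$ into \eqref{eq88}, and for the inference-time version you use the chain rule $I(X;M_{1:T})=\sum_t I(X;M_t\mid M_{1:t-1})\le TR_{\mathrm{step}}$ to place the final posterior in the feasible set of \eqref{eq84}; your explicit appeal to Lemma~\ref{lemma2} and your observation that the interior condition is monotone in $L$ are harmless additions.

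The one loose spot is your optional achievability aside, which goes beyond what the paper proves. Having the precisions merely sum to $K^{\star-1}-\Sigma_X^{-1}$ does not by itself keep each increment within $R_{\mathrm{step}}$. You would need to choose the split so that each step raises $\tfrac12\log\det$ of the posterior precision by exactly $R_{\mathrm{step}}$. This is possible, for example, by taking increments that are scalar multiples of $K^{\star-1}-\Sigma_X^{-1}$, with the weights chosen via the intermediate value theorem.
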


\begin{proof}
Recall the total information (compute) budget
\begin{equation}
R_{\mathrm{tot}}:=\sum_{\ell=0}^{L-1}R_\ell.
\label{eqC3_Rtot}
\end{equation}
Under the hypotheses of Theorem~\ref{thm7} and in the interior regime
(i.e., $\tilde W\succ 0$ and the optimizer satisfies $K^\star(R_{\mathrm{tot}})\prec \Sigma_X$),
the optimal value obeys the exponential law
\begin{equation}
D_{\mathrm{opt}}(R_{\mathrm{tot}})
=
k\,(\det\Sigma_X\,\det\tilde W)^{1/k}\,
\exp\!\left(-2\frac{R_{\mathrm{tot}}}{k}\right),
\label{eqC3_base}
\end{equation}
as stated in \eqref{eq88}. If $R_\ell=R_0$ for all $\ell$, then \eqref{eqC3_Rtot} yields
\begin{equation}
R_{\mathrm{tot}} = \sum_{\ell=0}^{L-1}R_0 = L R_0.
\label{eqC3_equal}
\end{equation}
Substituting \eqref{eqC3_equal} into \eqref{eqC3_base} yields
\[
D_{\mathrm{opt}}(L)
=
k(\det\Sigma_X\,\det\tilde W)^{1/k}\exp\!\left(-2L\frac{R_0}{k}\right),
\]
which is \eqref{eq89}. Let an inference procedure generate intermediate states $M_1,\dots,M_T$ such that for each step
\begin{equation}
I(X;M_t\mid M_{1:t-1})\le R_{\mathrm{step}}.
\label{eqC3_step}
\end{equation}
By the chain rule for mutual information and nonnegativity,
\begin{equation}
I(X;M_{1:T})
=
\sum_{t=1}^T I(X;M_t\mid M_{1:t-1})
\le
\sum_{t=1}^T R_{\mathrm{step}}
=
T R_{\mathrm{step}}
=: R_{\mathrm{tot}}.
\label{eqC3_chain}
\end{equation}
Therefore, the final representation $M_{1:T}$ satisfies the same information-budget constraint
as in Theorem~\ref{thm7} with total rate $R_{\mathrm{tot}}=TR_{\mathrm{step}}$.
Applying \eqref{eq88} with $R_{\mathrm{tot}}=TR_{\mathrm{step}}$ gives the same exponential scaling,
with $L$ replaced by $T$.
\end{proof}

\begin{remark}
The abstraction above yields several concrete interpretations for modern AI systems. 
\begin{itemize}
\item \emph{\bf Depth as cumulative rate:} If each layer can extract at most $R_\ell$ bits of new
task-relevant information from the context, then depth adds these budgets.
\item \emph{\bf Width and precision as per-layer rate:} A crude proxy is that $R_\ell$ scales with the
number of degrees of freedom that can be propagated through the layer: width, attention bandwidth,
and numerical precision. Lower precision reduces $R_\ell$, linking energy use and posterior entropy.
\item \emph{\bf Chain-of-thought as sequential rate allocation:} If each reasoning step adds only bounded
information about $X$, then Theorems~\ref{thm6} and~\ref{thm7} predict multiplicative reductions of
uncertainty with the number of steps.
\item \emph{\bf Retrieval and tool use as rate injection:} External retrieval and additional modalities
increase the effective $R_{\mathrm{tot}}$ by providing new informative inputs beyond the original $Z$.
\end{itemize}
\end{remark}

%%%%%%%%%%%%%%%%%%%%%%%%%%%%%%%%%%%%%%%%
%%%%%%%%%%%%%%%%%%%%%%%%%%%%%%%%%%%%%%%
%%%%%%%%%%%%%%%%%%%%%%%%%%%%%%%%%%%%%%%%%
\section{Discussion and Future Directions}
\label{section6}

This paper developed a unified information-theoretic framework for strategic Gaussian
RD under semantic misalignment, observation constraints, and information-rate
limitations. Beyond the explicit characterizations derived in
Sections~\ref{section3} and~\ref{section4}, the results raise broader conceptual implications for
energy-efficient and data-efficient artificial intelligence, and open several directions for
future research.

\subsection{Posterior Geometry as a Unifying Design Principle}

A central conceptual contribution of this work is the identification of posterior covariance
geometry as the fundamental object governing semantic performance under resource constraints.
Across all encoder observation models, the rate or compute constraint induces a log-det bound on
the posterior error covariance, while the semantic objective imposes a weighted precision
requirement. This leads to a geometric tradeoff between entropy reduction and semantic alignment,
expressed through generalized waterfilling laws in the direct and remote regimes and through
posterior design in the persuasion regime.

\subsection{Compute as an Information Rate: Implications for Efficient AI}

A key insight of the present work is that compute and architectural bottlenecks in modern learning
systems act as implicit information-rate constraints. Depth, width, attention bandwidth, context
length, and inference-time compute all limit the mutual information that can be propagated and
refined across layers or time steps. When viewed through this lens, neural scaling laws,
chain-of-thought reasoning, and iterative refinement can be interpreted as mechanisms for
allocating additional information rate to reduce posterior entropy.

\subsection{Multimodality as a Remedy for the Semantic Curse of Dimensionality}

Our gap analysis shows that remote semantic encoding suffers an irreducible geometric-mean penalty
that grows exponentially with dimension unless the semantic proxy is informative in every
direction. Multimodal observation emerges as a principled remedy. By aggregating complementary
modalities, the recoverable semantic covariance increases and the geometric penalty disappears,
allowing performance to approach the direct benchmark without proportional increases in rate or
compute.

\subsection{Strategic Communication and Alignment Considerations}

The strategic nature of the problem studied here highlights the importance of alignment and
objective mismatch in semantic systems. When the encoder and decoder optimize different
objectives, the encoder’s optimal strategy is not merely to compress information faithfully, but
to shape the decoder’s posterior beliefs in a manner consistent with its semantic priorities. In
the full-information regime, this manifests as Gaussian persuasion under a rate constraint.

\subsection{Modeling Assumptions and Limitations}

The analysis in this paper relies on several simplifying assumptions that enable closed-form
characterizations but also delineate the scope of the results.

First, we restrict attention to Gaussian sources and linear Gaussian semantic models with
quadratic objectives. This choice allows exact posterior characterizations (without the
concavification approach of Kamenica and Gentzkow \cite{kamenica2011bayesian}), but real-world
semantic variables may be discrete or heavy-tailed. Extending the posterior-design framework to
non-Gaussian settings is an important direction.

Second, the decoder is assumed to best respond via MMSE estimation. While this is optimal under
quadratic loss and Gaussian assumptions, practical systems may employ approximate inference.
Investigating robustness of the strategic RD solutions to suboptimal decoders would further
strengthen applicability.

Third, our compute-as-rate interpretation abstracts away many details of hardware and algorithmic
efficiency. Incorporating explicit energy models and hardware constraints remains open.

\section{Conclusion}
\label{section7}

This paper presented a unified theory of strategic semantic compression under rate, observation,
and compute constraints. By combining RD theory with information design, we
characterized fundamental limits on semantic inference in settings with misaligned objectives and
limited resources. The resulting framework yields explicit solutions for direct, remote, and
full-information encoding regimes, quantifies performance gaps between them, and explains how
multimodal observation and increased compute mitigate these gaps.

In closing, the central message of this work is that efficient intelligence is
fundamentally a problem of \emph{posterior design under resource constraints}.
By making posterior geometry explicit, information theory can play a central
role in guiding the development of future AI systems that are not only more
capable, but also more energy/data efficient and aligned.

\bibliographystyle{IEEEtran}

%\appendices

\bibliography{ref}

\end{document}